\documentclass[a4paper,USenglish,cleveref,autoref,thm-restate]{lipics-v2021}
\usepackage{algorithm}
\usepackage[noend]{algpseudocode}
\usepackage{xspace}
\usepackage{multicol}
\usepackage{fvextra}

\newcommand{\Z}{\mathbb{Z}}
\newcommand{\R}{\mathbb{R}}
\newcommand{\Rp}{\mathbb{R}_{\ge 0}}
\newcommand{\N}{\mathbb{N}}

\newcommand{\veca}{\mathbf{a}}
\newcommand{\vecx}{\mathbf{x}}
\newcommand{\vecy}{\mathbf{y}}
\newcommand{\vecz}{\mathbf{z}}
\newcommand{\vecw}{\mathbf{w}}
\newcommand{\vecv}{\mathbf{v}}
\newcommand{\vecc}{\mathbf{c}}
\newcommand{\vece}{\mathbf{e}}
\newcommand{\vech}{\mathbf{h}}
\newcommand{\vecmu}{\boldsymbol{\mu}}
\newcommand{\vecp}{\mathbf{p}}
\newcommand{\vecq}{\mathbf{q}}
\newcommand{\veczero}{\mathbf{0}}

\newcommand{\supp}{\mathrm{supp}}
\newcommand{\Supp}{\mathrm{Supp}}

\newcommand{\Normaliz}{\texttt{Normaliz}\xspace}

\pdfoutput=1
\hideLIPIcs
\nolinenumbers
\title{Scalable Enumeration of Pareto-optimal Polymers for Computing Equilibrium Concentrations}
\titlerunning{Scalable Enumeration of Pareto-optimal Polymers}

\author{Archit Patil}{University of Texas at Austin, USA}{architpatil@utexas.edu}{https://orcid.org/0009-0007-8094-0762}{}
\author{Minki Hhan}{Korea Advanced Institute of Science and Technology, South Korea}{minkihhan@kaist.ac.kr}{https://orcid.org/0000-0001-5143-4587}{}
\author{David Soloveichik}{University of Texas at Austin, USA}{david.soloveichik@utexas.edu}{https://orcid.org/0000-0002-2585-4120}{}

\authorrunning{A. Patil, M. Hhan, and D. Soloveichik}

\Copyright{Archit Patil, Minki Hhan, and David Soloveichik}

\ccsdesc{Computer systems organization~Molecular computing}
\ccsdesc{Theory of computation~Design and analysis of algorithms}
\ccsdesc{Applied computing~Computational biology}

\keywords{Molecular computation, Hilbert Basis, Thermodynamic Binding Network, covering design, equilibrium concentration}

\category{} 
\relatedversion{} 

\funding{Schmidt Sciences Polymath Award to D.S., DOE grant DE-SC0024248, NSF SemiSynBio III: GOALI grant 2227578}

\supplement{The supplementary materials include the source code, benchmark scripts, experimental input files, and generated data.}

\supplementdetails[subcategory={Github Repo}, cite={}, swhid={}]{Software}{https://github.com/architrahul/Pareto-polymer-enumerator}

\acknowledgements{We thank And Kaan Ata Yilmaz for valuable discussions and insights.}

\EventEditors{John Q. Open and Joan R. Access}
\EventNoEds{2}
\EventLongTitle{42nd Conference on Very Important Topics (CVIT 2016)}
\EventShortTitle{CVIT 2016}
\EventAcronym{CVIT}
\EventYear{2016}
\EventDate{December 24--27, 2016}
\EventLocation{Little Whinging, United Kingdom}
\EventLogo{}
\SeriesVolume{42}
\ArticleNo{23}
\begin{document}

\maketitle

\begin{abstract}
Predicting equilibrium concentrations of molecular complexes is essential for verifying the behavior of engineered DNA systems.
However, a finite set of monomer types can in principle generate infinitely many complexes.
We study this candidate-enumeration problem in a geometry-free, domain-level abstraction called a domain-monomer system, generalizing Thermodynamic Binding Networks (TBNs) to the unsaturated setting where not every possible bond need be formed.
We define \emph{Pareto-suboptimal} polymers as those that can be split into non-interacting parts, and show that restricting attention to \emph{Pareto-optimal} polymers is thermodynamically justified: no Pareto-suboptimal polymer appears in any minimum free-energy configuration, and the total equilibrium concentration of such polymers is small.
We prove that there are finitely many Pareto-optimal polymers and exactly characterize them via a Hilbert basis computation, extending prior work from the saturated TBN model. To scale this approach to large systems, we develop a framework that restricts the number of different monomer types that a single polymer contains, and uses combinatorial covering designs to reduce the number of Hilbert basis computations required.
We benchmark the method on several families of DNA molecular programming systems, demonstrating order-of-magnitude speedups over direct computation while recovering nearly all equilibrium-relevant polymers.
\end{abstract}

\section{Introduction}
Understanding and engineering complex molecular systems and behaviors requires both kinetic and thermodynamic modeling.
In the context of dynamic DNA nanotechnology,
NUPACK~\cite{dirks2007thermodynamic,NUPACK} is widely used to analyze the minimum free energy configurations of complexes, and their concentrations at thermodynamic equilibrium.
However, since NUPACK operates at the individual base-pair level, it is infeasible to use for large systems (many strands and/or large complex sizes).
At the same time, molecular programming researchers have been developing systems of ever larger complexity.
For example, recent work demonstrating DNA neural networks and logic circuits involved 700 distinct complexes and over 200 strands in a single test tube~\cite{cherry2025supervised,song2025heat}.
Such complexity necessitates a higher level of abstraction.

Most engineered DNA systems are organized around domains, contiguous sequence regions that behave as logical binding units. Domain-level abstractions make different geometric commitments: some models enforce rigid geometry (e.g., tile assembly), while others ignore geometry and retain only which domain types can bind. In this paper we study the latter, geometry-free setting. Our formalism is a \emph{domain-monomer system}, closely related to Thermodynamic Binding Networks (TBNs)~\cite{TBNpaper,breik2019computing,breik_kinetics,haley2021computing,wang2026molecular}, in which a monomer is a multiset of abstract domains, complementary domains can bind, and a polymer is a multiset of monomers. Unlike the saturated TBN setting which is meant to capture strong bonding, we do not require every possible bond to be formed.
The solution might have polymers with unbound complementary domains, corresponding to a more realistic model that permits bonds of varying affinities.
The central thermodynamic question is to determine the equilibrium distribution of polymer concentrations: which polymers form, and in what amounts? In engineered systems, some polymers are intended products and verifying correct behavior requires checking their equilibrium concentrations (e.g., that certain polymers are present in specific ratios), while others are undesired leakage.

In typical systems there is a relatively small set of polymers that dominate the equilibrium, with the rest present in negligible concentrations (a finite set of monomer types can generate infinitely many polymers).
If we could a priori eliminate the vast majority of safe-to-neglect polymers, then equilibrium optimization itself would be efficient (tools such as COFFEE~\cite{COFFEE} can handle hundreds of thousands of candidate polymers).
But constructing a sensible finite candidate set of polymers in a principled way appears nontrivial as it must be defined independently of the unknown equilibrium solution itself.
We address this difficulty with a natural structural notion called \emph{Pareto-optimality}. Informally, a polymer is Pareto-suboptimal if it can be split into two non-complementary polymers.
More precisely, a polymer is Pareto-suboptimal if it is possible to reconfigure the monomers of the polymer into two or more separate polymers without decreasing overall bonding.
Such a split preserves the total bond energy (enthalpy) while producing more separate complexes (increasing entropy), so Pareto-suboptimal polymers are thermodynamically disfavored.
We formally justify the focus on Pareto-optimal polymers by showing that no Pareto-suboptimal polymer appears in a minimum free-energy configuration in a discrete model, and further that in the continuous equilibrium model their total concentration is bounded in terms of the total concentration of Pareto-optimal polymers.
This makes Pareto-optimal polymers a principled candidate set for equilibrium analysis.

Prior work of Haley and Doty~\cite{haley2021computing} connected Hilbert bases to polymer enumeration in the saturated TBN model, which takes the thermodynamic limit of strong bonds. 
Our goal is to extend this algebraic perspective to the more general unsaturated setting of possibly weak bonds, allowing different bond strengths corresponding to hybridization energies of different domains.
Further, we want to make the analysis practical on large systems where direct Hilbert basis computation is far too expensive.

Our contributions are as follows.
\begin{itemize}
    \item We bound the total concentration of Pareto-suboptimal polymers in geometry-free domain-monomer systems, justifying their exclusion.
    \item We show that the full set $P^*$ of Pareto-optimal polymers is finite and can be characterized exactly via Hilbert basis enumeration. This extends the Hilbert-basis connection from the saturated TBN, discrete setting~\cite{haley2021computing}.
    \item We develop a scalable framework for enumerating the subset $P_t^* \subseteq P^*$ of Pareto-optimal polymers that use at most $t$ monomer types (\emph{support-boundedness restriction}).
    We first express $P_t^*$ using Hilbert bases of smaller subsystems, and reduce the number of required subsystem computations by using covering designs. The resulting method returns a superset $\hat{P}_{k,t}$ satisfying $P_t^* \subseteq \hat{P}_{k,t} \subseteq P^*$.
Importantly, the parameter $t$ is interpretable: rather than imposing an opaque algebraic or numerical cutoff, it bounds the number of distinct monomer types in a polymer, a quantity often apparent from the design specification of the intended products and plausible leakage pathways.
    \item We describe implementation techniques, including a probe-and-prune heuristic for choosing the block size $k$, and benchmark the method on several families of molecular systems from the recent literature~\cite{wang2026molecular,yilmaz2026modularity,sterin2025thermodynamically}. For example, on a 7-module linear cascade, the covering-design strategy with $(t,k)=(5,25)$ reduces polymer enumeration time from $>$1000\,s for direct full Hilbert basis computation to 24\,s, without missing any equilibrium-relevant polymer in the corresponding leakage analysis.
\end{itemize}

Our results provide one theoretical and algorithmic answer to the candidate-enumeration problem for equilibrium analysis in large geometry-free molecular systems: Pareto-optimal polymers give a well-justified reduced search space characterized by a Hilbert basis enumeration, and covering-design-based support restriction makes computation feasible in practice.

\section{Preliminaries}
\subsection{Model}
Let $\Z,\N,\R$, and $\Rp$ denote the sets of integers, nonnegative integers, real numbers, and nonnegative real numbers, respectively.
For a finite set $A$, we define $\N^A, \Z^A, \R^A,$ and $\Rp^A$ as the sets of $|A|$-dimensional vectors over $\N, \Z, \R,$ and $\Rp$, respectively. The vector $\vecx \in \N^A$ (or $\Z^A, \R^A, \Rp^A$) is represented by $(\vecx (a))_{a \in A}$, or sometimes $(\vecx_a)_{a \in A}$ if there is no confusion.
A multiset $\vecx$ over $A$ is an unordered collection of elements allowing duplicates, which is naturally identified by a vector $\vecx$ in $\N^A$.
Similarly, we define $\Z^{A\times B}$ as the set of $|A|\times|B|$-matrices over $\Z$ for finite sets $A, B$. The matrix-vector multiplication $M\vecx$ for $M\in \Z^{A\times B}$ and $\vecx \in \Z^B$ (or $\N^B, \R^B, \Rp^B$) is defined in a natural way.

We use the following three-level abstraction motivated by DNA nanotechnology and the TBN model~\cite{TBNpaper,breik2019computing}. This model captures how monomers (indivisible molecules) are described by domains, and how they combine to form polymers (complexes).
\begin{definition}
    A \emph{domain-monomer system} is a pair $(\Sigma,M)$ where
\begin{itemize}
    \item $\Sigma = D \cup D^*$ is a set of domains described by $D$ and $D^*$, which are the disjoint and same-sized sets of primal and complementary domains, respectively. For each $a \in D$, there exists a corresponding $a^* \in D^*$ that can form a bond with $a$, i.e., $D^* = \{a^* : a \in D\}$. We say $a$ and $a^*$ are complementary to each other, and call $(a,a^*)$ the domain pair.
    \item $M$ is a finite set of monomers, each of which is a finite multiset of domains over $\Sigma$.
\end{itemize}
\end{definition}
Each monomer $m \in M$ is encoded as a vector $\mathbf{a}_m \in \mathbb{Z}^{D}$, where the entry corresponding to the domain pair $\sigma = (a, a^*) $ is $+1$ for each occurrence of $a$ in $m$ and $-1$ for $a^*$.
This gives the \emph{monomer matrix} $A \in \mathbb{Z}^{D \times M}$, whose columns $\mathbf{a}_m$ correspond to each $m \in M$.

Given a domain-monomer system $(\Sigma,M)$, a polymer is defined as a finite, nonzero multiset of monomers.
We represent each polymer as a vector $\vecp \in \mathbb{N}^{M}\setminus \{\veczero\}$ where $\vecp(m)$ (or sometimes $\vecp_m$) denotes the number of copies of monomer $m$ in polymer $\vecp$.
In other words, $P=\N^M\setminus \{\veczero\}$ defines the set of all possible polymers.
The net domain vector of $\vecp$ is $A\vecp \in \mathbb{Z}^{D}$, whose entry for domain $a$ gives the
net count of $a$ minus $a^*$ across all monomers in the polymer. We call an equation $\vecp=\vecp_1+\vecp_2$ a split (of $\vecp$).

Intrapolymer bonding is assumed throughout: complementary domains within a single polymer are taken to be already bound. Under this assumption, the free domains of a polymer $\vecp$ are precisely those whose corresponding entry in $A\vecp$ is nonzero.
Note that as in the TBN model, there is no underlying notion of geometry which can prevent two complementary domains within a polymer from binding.

\subsection{Computing Polymer Concentrations at Equilibrium}
The \emph{concentration} $[\vecp]$ of polymer $\vecp \in P$ specifies the amount of polymer $\vecp$ in the system.
In our theoretical results, we use dimensionless mole fractions~\cite{dirks2007thermodynamic},
capturing the amount of the polymer relative to the solvent molecules (water).
In normal dilute solutions, $\sum_{\vecp \in P} [\vecp] \ll 1$.
For each $m\in M$, the polymer concentration induces the total monomer concentration
\begin{equation}\label{eqn:monomer_concentration}
    [m]=\sum_{\vecp \in P} \vecp_m [\vecp].
\end{equation}

Equilibrium analysis aims to determine the concentrations of polymers at equilibrium, given an initial supply of monomers (or polymers).
NUPACK's {\tt complex\_concentrations} function~\cite{NUPACK}\footnote{Equivalently the {\tt concentrations} command line tool for older NUPACK 3.} and COFFEE~\cite{COFFEE} can compute equilibrium concentrations given the initial monomer concentrations and a candidate set of polymers. However, the main computational bottleneck is typically not the equilibrium computation itself, but the construction of an appropriate candidate set of polymers.
For a given domain-monomer system, the set of possible polymers is generally infinite and cannot be enumerated exhaustively.

In this paper, we introduce the principled restriction to polymers that are Pareto-optimal.
As we show next, Pareto-suboptimal polymers can be excluded on thermodynamic grounds. This reduces the search space from an intractable infinite family to a finite set of
candidates, after which additional enumeration strategies can be applied.

For complex systems, even restricting to Pareto-optimal polymers is computationally intractable.
This motivates our subsequent methodology of limiting the ``size'' of the enumerated Pareto-optimal polymers, where size can be defined in two ways:
(1) the number of different types of monomers, or
(2) the number of types of domain pairs in the polymer.

\begin{remark}
A domain-monomer system corresponds exactly to a Thermodynamic Binding Network~(TBN)~\cite{TBNpaper,breik2019computing} under two relaxed assumptions. First, saturation is not assumed: the net domain vector $A\mathbf{x}$ of a polymer $\mathbf{x}$ need not be nonnegative, meaning unstarred domains may remain unbound. Second, no assumption is made that enthalpy dominates entropy. The present work analyzes TBNs from this more general perspective, without either idealization.
\end{remark}

\section{Pareto-Optimal Polymers}
\subsection{Defining Pareto-Optimal Polymers}

Throughout this paper, we fix the domain-monomer system $(\Sigma,M)$ unless specified otherwise. 
In this section, we introduce reasoning that allows us to exclude certain polymers from consideration, based on the observation that they do not occur in minimum free energy configurations. This leads to a \emph{finite} search space instead of the infinite set of all polymers.

\begin{definition}[Complementary polymers]
Two polymers $\vecp,\vecq\in P$ are complementary if there exists a domain such that the entries of $A\vecp$ and $A\vecq$ corresponding to that domain are both nonzero and have opposite sign. In other words, $(A\vecp)_a \cdot (A\vecq)_a <0$ for some $a\in D$.
\end{definition}

Equivalently, $\vecp$ and $\vecq$ are complementary if and only if the combined polymer $\vecp+\vecq$ contains at least one intrapolymer bond between a domain of $\vecp$ and a domain of $\vecq$, so that splitting $\vecp+\vecq$ back into $\vecp$ and $\vecq$ necessarily breaks at least one bond.

\begin{definition}[Pareto-(sub)optimal polymer]
A polymer $\vecp \in P$ is \emph{Pareto-suboptimal} if there exists a split $\vecp = \vecp_1 + \vecp_2$ with $\vecp_1, \vecp_2 \in P$ such that $\vecp_1$ and $\vecp_2$ are not complementary. If $\vecp$ is not Pareto-suboptimal, it is \emph{Pareto-optimal}. The set of all Pareto-optimal polymers is denoted $P^*$.
\end{definition}

\Cref{fig:suboptimal-example} illustrates a Pareto-suboptimal polymer: its monomers can be regrouped into two non-complementary polymers without breaking any bond.
We stress that the model does not track bonds between specific domains.
A polymer is represented only by its monomer multiset $\vecp$; the explicit bonds drawn in the figure are an informal visualization rather than part of the model.
From the model's perspective, the reconfiguration changes only how the monomers are partitioned into separate polymers, leaving the bonding unchanged.
Note that the geometry-free nature of the model is essential here.
Geometric constraints could permit $\vecp$ but not its decomposition into $\vecp_1 + \vecp_2$.

\subsection{Thermodynamic Justification}

In this section we justify the exclusion of Pareto-suboptimal polymers on thermodynamic grounds, arguing that they are in a certain sense negligible.
We begin with the discrete model, showing that no Pareto-suboptimal polymer appears in any minimum free energy configuration.
This provides strong motivation for the exclusion, but the gap between discrete minimum free energy configurations and continuous equilibrium concentrations calls for a quantitative bound in the continuous regime.
Accordingly, in the second part of this section we derive a general bound on the total concentration of suboptimal polymers in terms of the total concentration of optimal ones.

\subsubsection{Discrete thermodynamics}
Towards justifying the exclusion of Pareto-suboptimal polymers, in this section we use a simple discrete model to show that any minimum free energy configuration contains only Pareto-optimal ones. Given an initial composition, i.e., a multiset of monomers, a \emph{configuration} is a multiset of polymers built from this composition. The \emph{free energy} of a configuration $C$ is \[G(C) \;=\; a \cdot H(C) \;-\; b \cdot S(C)\]
where $H(C)<0$ is the total enthalpy (sum of bond energies across all bonds in $C$), $S(C)$ is the entropy (number of separate polymer units), and $a, b > 0$ are system-dependent weights. The probability of observing a configuration $C$ is proportional to its Boltzmann weight:
  \[\mathbb{P}[C] \;\propto\; e^{-G(C)/kT}\]

\begin{lemma}
\label{lem:paretosub-equilibrium}
No Pareto-suboptimal polymer appears in any minimum free energy configuration.
\end{lemma}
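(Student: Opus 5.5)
We argue by contradiction, using an exchange argument. Suppose $C$ is a minimum free energy configuration containing a Pareto-suboptimal polymer $\vecp$. By definition there is a split $\vecp=\vecp_1+\vecp_2$ with $\vecp_1,\vecp_2\in P$ not complementary. Let $C'$ be the configuration obtained from $C$ by removing one copy of $\vecp$ and adding one copy each of $\vecp_1$ and $\vecp_2$. Since $\vecp_1+\vecp_2=\vecp$, $C'$ uses the same multiset of monomers, so it is a valid configuration of the same initial composition. We will show $G(C')<G(C)$, contradicting minimality.

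The entropy term is immediate: $S(C')=S(C)+1$, so $-b\,S(C')=-b\,S(C)-b$ with $b>0$.

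The main step is to check that the enthalpy does not increase, i.e.\ $H(C')\le H(C)$. Under the intrapolymer bonding assumption, the bonds in a polymer $\vecq$ are determined per domain pair $(a,a^*)$. The number of $a$--$a^*$ bonds in $\vecq$ is $\min(n_a(\vecq),n_{a^*}(\vecq))$, where $n_a$ and $n_{a^*}$ count occurrences of $a$ and $a^*$. The corresponding enthalpy contribution is this number times the (negative) bond energy of $a$. Fix a domain $a$ and write $x_i=n_a(\vecp_i)$ and $y_i=n_{a^*}(\vecp_i)$. Non-complementarity says $(x_1-y_1)$ and $(x_2-y_2)$ do not have strictly opposite signs, so both are $\ge0$ or both are $\le0$. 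In the first case
\[
\min(x_1+x_2,\,y_1+y_2)=y_1+y_2=\min(x_1,y_1)+\min(x_2,y_2),
\]
and symmetrically in the second case. Hence the number of $a$-bonds in $\vecp$ equals the sum over $\vecp_1$ and $\vecp_2$. Summing over domain pairs gives $H(\vecp)=H(\vecp_1)+H(\vecp_2)$, and therefore $H(C')=H(C)$.

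Combining the two steps, $G(C')=G(C)-b<G(C)$, which is the contradiction. The only delicate point is pinning down the enthalpy of a polymer in this geometry-free model as a per-domain sum of $\min$-counts. With that identification, the non-complementarity condition is exactly what makes $\min$ additive across the split. If instead the paper intends bond counts to be any achievable matching, the same argument still applies: a maximum matching within $\vecp$ has the same per-domain size as the union of maximum matchings within $\vecp_1$ and $\vecp_2$, so again no bonds are lost.
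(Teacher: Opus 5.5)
Your proof is correct and is the same exchange argument the paper uses: replace one copy of the Pareto-suboptimal polymer by its two non-complementary parts, which leaves $H$ unchanged and raises $S$ by one, so $G$ drops by $b$. Your per-domain $\min$-count check makes explicit the paper's one-line assertion that no bond is broken in a non-complementary split; the paper states this without computation.
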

The proof is deferred to \cref{app:proofTheormo}.

\Cref{fig:pareto-curve-subfigure} shows an example configuration $C_1$ where the first polymer is Pareto-suboptimal.
Splitting this polymer yields
configuration $C_2$, which strictly improves on the free energy.
Note that a configuration may have no Pareto-suboptimal polymer and still not lie on the Pareto-optimal frontier.
For example, the second and third polymers in $C_2$ can be reconfigured into three polymers, which again strictly improves on the free energy (configuration $C_3$).

\begin{figure}[ht]
    \centering

  \begin{subfigure}[t]{0.15\textwidth}
    \includegraphics[width=\textwidth]{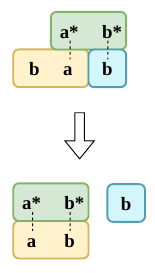}
    \caption{}
    \label{fig:suboptimal-example}
  \end{subfigure}
    \hspace{20pt}%
  \begin{subfigure}[t]{0.6\textwidth}
    \includegraphics[width=\textwidth]{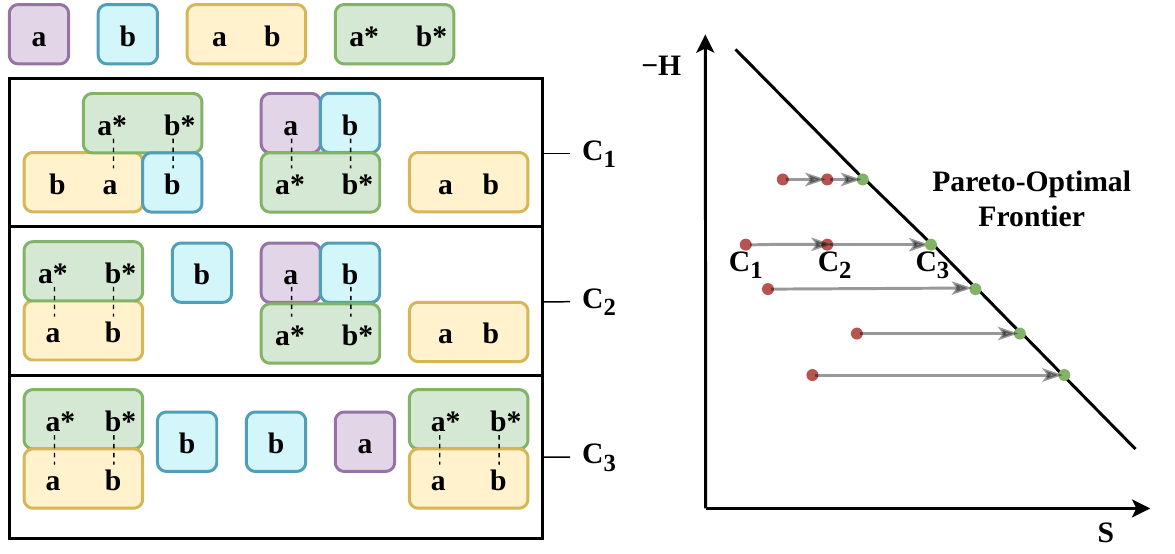}
    \caption{}
    \label{fig:pareto-curve-subfigure}
  \end{subfigure}

    \caption{Pareto-suboptimal polymers and the Pareto-optimal frontier.
    \textbf{(\subref{fig:suboptimal-example})}~Visualization of a Pareto-suboptimal polymer (top) whose monomers can be regrouped into two non-complementary polymers with the same number of bonds (bottom).
    \textbf{(\subref{fig:pareto-curve-subfigure})}~
    The Pareto-optimal frontier in the $(S, - H)$ plane.
    Minimum free energy configurations lie on the frontier.
    Configurations containing Pareto-suboptimal polymers lie strictly below the frontier;
    each such configuration is dominated by a configuration with the same
    enthalpy and strictly higher entropy (rightward arrows).
    While suboptimal polymers necessarily make a configuration suboptimal, the implication does not go the other way.
    For example, $C_2$ does not contain a Pareto-suboptimal polymer, but reconfiguring its monomers yields the configuration $C_3$ with one more polymer but the same bonds.
    }
    \label{fig:pareto-curve}
\end{figure}

\subsubsection{Continuous thermodynamics}
Toward further justifying the exclusion of Pareto-suboptimal polymers, we now turn to the continuous concentrations regime.
We show that the total concentration of Pareto-suboptimal polymers is much smaller than the total concentration of Pareto-optimal polymers at equilibrium.

The equilibrium in the continuous model corresponds to the minimizer of the pseudo-Helmholtz free energy
\[
g = \sum_{\mathbf{p} \in P} [\mathbf{p}]\bigl(\log [\mathbf{p}] + G_\mathbf{p} - 1\bigr),
\]
where
$G_\mathbf{p} = -\log \Omega_\mathbf{p}$ is the formation energy of the polymer $\vecp\in P$, subject to the monomer conservation constraints \cref{eqn:monomer_concentration}.

At equilibrium, for any (nontrivial) split $\mathbf{p} = \mathbf{p}_1 + \mathbf{p}_2$ where $\mathbf{p}_1$ and $\mathbf{p}_2$ are not complementary, the following holds for the dimensionless mole fractions in \cref{eqn:monomer_concentration}:
\begin{equation}\label{eq:factorization}
[\mathbf{p}] = [\mathbf{p}_1]\cdot [\mathbf{p}_2].
\end{equation}
This equality follows from the first-order optimality conditions of the pseudo-Helmholtz free energy minimization, which yield $[\mathbf{p}] = \Omega_\mathbf{p} \prod_{m \in M} z_m^{\mathbf{p}(m)}$ for some $z_m > 0$. The factorization over a non-complementary split then holds because the formation energy is additive across non-interacting parts. This is a standard result in the polymer equilibrium literature~\cite{dirks2007thermodynamic}.

\begin{theorem}\label{thm:suboptimal-bound}
Let $[\cdot]$ denote equilibrium concentrations, and define
$C_{\mathrm{opt}} := \sum_{\mathbf{p} \in P^*} [\mathbf{p}],$ and $
C_{\mathrm{sub}} := \sum_{\mathbf{p} \in P \setminus P^*} [\mathbf{p}].$
If $C_{\mathrm{opt}} < 1$, then
\[
C_{\mathrm{sub}} \leq \frac{C_{\mathrm{opt}}^2}{1 - C_{\mathrm{opt}}}.
\]
\end{theorem}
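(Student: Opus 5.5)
We plan to show that every Pareto-suboptimal polymer factors completely into Pareto-optimal pieces, and then to bound the resulting sum by a geometric series.

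\emph{Step 1: decomposition.} Every polymer $\vecp\in P$ can be written as $\vecp=\vecq_1+\cdots+\vecq_r$ with each $\vecq_i\in P^*$, obtained through successive non-complementary splits. We argue by induction on $|\vecp|=\sum_m \vecp_m$. If $\vecp\in P^*$, take $r=1$. Otherwise split $\vecp=\vecp_1+\vecp_2$ non-complementarily, so that by \cref{eq:factorization} $[\vecp]=[\vecp_1][\vecp_2]$. Applying the induction hypothesis to $\vecp_1$ and $\vecp_2$ and iterating the factorization gives $[\vecp]=\prod_{i=1}^r[\vecq_i]$, with $r\ge 2$ whenever $\vecp\notin P^*$. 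The only subtlety is that the recursive splits occur inside $\vecp_1$, not inside $\vecp$. Since \cref{eq:factorization} is applied to each split separately, the product formula nevertheless follows by straightforward induction.

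\emph{Step 2: counting.} We obtain an injection-up-to-multiplicity. Each $\vecp\in P\setminus P^*$ is associated with at least one ordered tuple $(\vecq_1,\dots,\vecq_r)\in (P^*)^r$ with $r\ge2$, $\sum_i \vecq_i=\vecp$, and $[\vecp]=\prod_i[\vecq_i]$. Choose one such tuple per $\vecp$. Distinct $\vecp$ yield distinct tuples, because the tuple determines $\vecp$ as its sum. Therefore
\[
C_{\mathrm{sub}}\le \sum_{r\ge 2}\ \sum_{(\vecq_1,\dots,\vecq_r)\in (P^*)^r}\ \prod_{i=1}^r[\vecq_i] = \sum_{r\ge2} C_{\mathrm{opt}}^r = \frac{C_{\mathrm{opt}}^2}{1-C_{\mathrm{opt}}},
\]
where the geometric series converges because $C_{\mathrm{opt}}<1$. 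All terms are nonnegative, so the rearrangement is justified. This holds even if $P^*$ were infinite, although the paper asserts that it is finite.

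\emph{Main obstacle.} The main point requiring care is Step 1's use of \cref{eq:factorization}. That identity is stated for non-complementary splits at equilibrium via $[\vecp]=\Omega_\vecp\prod_m z_m^{\vecp(m)}$ with $\Omega_\vecp=\Omega_{\vecp_1}\Omega_{\vecp_2}$. Taking it as given, we use it verbatim for every split in the recursion, since each split is itself a non-complementary split of a polymer. No compatibility between splits at different levels is needed, because the induction composes the factorizations multiplicatively. A cruder alternative is to stop after one split, bounding $C_{\mathrm{sub}}\le(\sum_{\vecp\in P}[\vecp])^2$. That bound would then have to be solved as a quadratic inequality, so we prefer the full decomposition.
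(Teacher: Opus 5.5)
Your proposal is correct and follows the paper's own argument. Both proceed the same way: recursive non-complementary splitting into Pareto-optimal pieces, with the product formula built up from \cref{eq:factorization}; then one fixed tuple per suboptimal polymer, an injectivity argument, and domination by $\sum_{k\ge 2} C_{\mathrm{opt}}^k$. Your induction on $|\vecp|$ and your note that the tuple's sum recovers $\vecp$ simply make explicit the termination and injectivity claims that the paper states briefly.
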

In particular, in the dilute regime when $C_{\mathrm{opt}} \ll 1$, we have $C_{\mathrm{sub}} = O(C_{\mathrm{opt}}^2)$, so the total concentration of Pareto-suboptimal polymers is negligible compared with the total concentration of Pareto-optimal polymers.
We defer the proof to \cref{app:proofbound}.

\section{Enumeration of Pareto-Optimal Polymers via Hilbert Basis Computation}
\label{sec:hilbert}
This section shows that the Pareto-optimal polymers of a domain-monomer system can be identified by a Hilbert basis of a set of neutral polymers having zero net domain vectors in an extended system.

Our Hilbert-basis characterization generalizes the saturated-TBN framework of Haley and Doty~\cite{haley2021computing}.
Specifically, while the original TBN model~\cite{TBNpaper,breik2019computing} used by the prior work focuses on configurations with maximum binding (called saturated configurations), our domain-monomer system relaxes this assumption and allows unpaired complementary domains in different polymers to coexist in a configuration.
In order to make this generalization, we utilize an additional augmented neutralization construction described below.

Given a domain-monomer system $(\Sigma, M)$ with monomer matrix
$A \in \mathbb{Z}^{|D| \times |M|}$, we define an \emph{augmented
(domain-monomer) system} by introducing two \emph{unit monomers} per domain pair. More precisely, the augmented system $(\Sigma,M')$ is a domain-monomer system defined as follows: It inherits the original domain $\Sigma$, and $M'$ is defined as a superset of $M$ by
\[
M' = M \cup \left\{u_a=\{a\},u_{a^*}=\{a^*\}: (a,a^*)\text{ is a domain pair}
\right\}.
\]
In words, for each domain pair $(a, a^*)$, the unit monomer $u_a$ contains only
$a$, and the unit monomer $u_{a^*}$ contains only $a^*$. Both are added as monomers in the augmented system.
Their columns in the
augmented monomer matrix are $+1$ and $-1$, respectively, in the coordinates
corresponding to $(a, a^*)$ and $0$ elsewhere. Let $M'$ denote the augmented
monomer set and $A' \in \mathbb{Z}^{D \times M'}$ the corresponding
augmented monomer matrix.
It is worth noting that for $\vecw=(\vecw_M,\vecw_\Sigma) \in \Z^{M}\times \Z^{\Sigma} = \Z^{M'}$, it holds that for $a\in D$
\begin{equation}\label{eqn:A'expanding}
(A'\vecw)_a=(A\vecw_M)_a +(\vecw_{\Sigma})_{u_a}-(\vecw_{\Sigma})_{u_{a^*}}.
\end{equation}

For simplicity, we assume that the original monomer set $M$ does not contain any unit monomer $u_a$ or $u_{a^*}$ so that $|M'|=|M|+|\Sigma|$, which significantly simplifies the details of this section.
However, our results continue to hold without this assumption with minor modification of the argument.\footnote{For example, we can add ``dummy'' domains to the singleton monomers in the original system to enforce the assumption.}

A polymer $\vecx \in \Z^{M}$ can be \emph{neutralized} by attaching the unit monomers to unbound domains. More precisely, we define the neutralizing map $\nu:\Z^{M}\to \Z^{M'}$ as follows.
\[
(\nu(\vecx))_m=
\begin{cases}
    \vecx_m \text{ for }m\in M\\
    \max\left(-(A\vecx)_a,0\right) \text{ for }m=u_a\\
    \max\left((A\vecx)_a,0\right) \text{ for }m=u_{a^*}
\end{cases}
\]
In words, for each pair $(a,a^*)$, we add $c$ copies of $u_{a^*}$ if $\vecx$ has $c>0$ unbound domain $a$, and add $c$ copies of $u_{a}$ if $\vecx$ has $c>0$ unbound $a^*$. \cref{fig:neutralizing} visualizes the neutralizing map.

\begin{figure}[ht]
    \centering
    \includegraphics[width=0.7\linewidth]{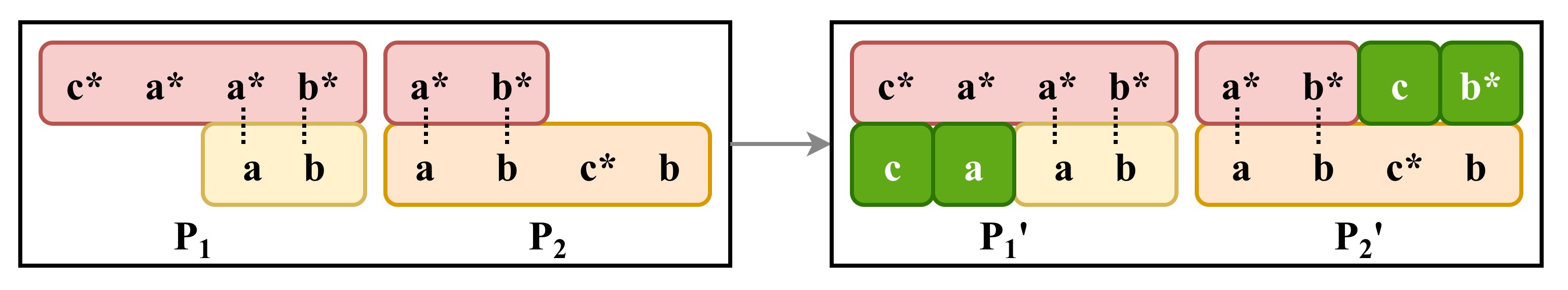}
    \caption{The polymers in the first configuration have been neutralized by unit monomers so that each vector representation has zero net domain vector.}
    \label{fig:neutralizing}
\end{figure}
We give the algebraic interpretation.
For each $a\in D$, let $e_a\in \mathbb{Z}^D$ denote the unit vector with a $1$ in the coordinate corresponding to $a$ and $0$ in all other coordinates. Observe that the neutralizing map indeed makes the net domain vector zero:
\[
A'\cdot \nu(\vecx)
=
A\vecx
+
\left(\sum_{a \in D}\max\left(-(A\vecx)_a,0\right) \cdot e_a\right)
+
\left(\sum_{a \in D}\max\left((A\vecx)_a,0\right) \cdot (-e_a)\right)
=
\veczero.
\]
This follows from \cref{eqn:A'expanding} coordinate-wise for each $a\in D$.
This means that all the domains are bound, justifying the name of the neutralizing map $\nu$.
In general, we define the set of \emph{neutral} polymers\footnote{In other words, the saturated polymers. We stress that we do not assume that all polymers are saturated; we consider this set only in the augmented system for deriving the computational tools.} in the augmented system:
\[
  \mathcal{C}(A') \;=\; \bigl\{\, \vecz \in \mathbb{N}^{M'}
  \;\bigm|\; A'\vecz = \mathbf{0} \,\bigr\}.
\]
Each $\vecz \in \mathcal{C}(A')$ has a zero net domain vector. The set of neutral polymers has a clean algebraic representation, allowing us to relate the Pareto-optimal polymers in the system with the ``indecomposable'' elements, i.e., the Hilbert basis elements as we will see later.

Given a neutral polymer in the augmented system,
we can recover a polymer in the original system using a projection map $\pi:\Z^{M'}=\Z^M\times \Z^{\Sigma}\to \Z^{M}$ that discards the unit monomer coordinates. In particular, $\pi(\nu(\vecx))=\vecx$ holds for all polymers $\vecx\in\Z^M$.

It is clear that $(\nu(\vecx))_{u_a}\cdot (\nu(\vecx))_{u_{a^*}}=0$ for all $a\in D$. We call a polymer $\vecw \in \Z^{M'}$ \emph{dichotomous} if it satisfies this property: $(\vecw)_{u_a}\cdot (\vecw)_{u_{a^*}}=0 \text{ for all }a \in D$.
The following lemma is not hard to show.
Intuitively, among all neutral augmented polymers that project to a fixed original polymer $\vecx$,
there is exactly one with no redundant canceling pair of unit monomers.

\begin{lemma}\label{lem:unique_lifting}
    $\nu(\vecx)$ is the unique dichotomous neutral polymer in $\pi^{-1}(\vecx) \cap \mathcal C(A')$.
\end{lemma}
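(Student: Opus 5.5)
The plan is to prove existence and uniqueness separately, working directly from \cref{eqn:A'expanding}. Fix a polymer $\vecx\in\N^M$.

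\textbf{Existence.} We check that $\nu(\vecx)$ lies in $\pi^{-1}(\vecx)\cap\mathcal C(A')$ and is dichotomous.
\begin{itemize}
\item All entries of $\nu(\vecx)$ are nonnegative integers, because $\vecx\in\N^M$ and the unit-monomer entries are maxima with $0$.
\item The computation displayed just before the statement gives $A'\nu(\vecx)=\veczero$.
\item We have $\pi(\nu(\vecx))=\vecx$ by definition.
\item Dichotomy holds because, for each $a$, at most one of $\max(-(A\vecx)_a,0)$ and $\max((A\vecx)_a,0)$ is nonzero.
\end{itemize}

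\textbf{Uniqueness.} Let $\vecw=(\vecw_M,\vecw_\Sigma)\in\mathcal C(A')$ be dichotomous with $\pi(\vecw)=\vecx$, so $\vecw_M=\vecx$. Fix a domain pair $(a,a^*)$ and write $s=(\vecw_\Sigma)_{u_a}\ge 0$ and $t=(\vecw_\Sigma)_{u_{a^*}}\ge 0$. Neutrality together with \cref{eqn:A'expanding} gives
\[
(A\vecx)_a+s-t=0, \qquad\text{so}\qquad t-s=(A\vecx)_a.
\]
Dichotomy gives $st=0$. We then split into cases on the sign of $(A\vecx)_a$.
\begin{itemize}
\item If $(A\vecx)_a>0$, then $t>s\ge 0$ forces $t>0$. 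Hence $s=0$ and $t=(A\vecx)_a$.
\item If $(A\vecx)_a<0$, the symmetric argument gives $t=0$ and $s=-(A\vecx)_a$.
\item If $(A\vecx)_a=0$, then $s=t$ and $st=0$, so $s=t=0$.
\end{itemize}
In every case $s=\max(-(A\vecx)_a,0)$ and $t=\max((A\vecx)_a,0)$. These are exactly the entries of $\nu(\vecx)$, so $\vecw=\nu(\vecx)$.

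There is no real obstacle here. The only point needing care is that the standing assumption that $M$ contains no unit monomers makes the coordinates of $M'$ split cleanly as $M\sqcup\{u_a,u_{a^*}\}$. This is what justifies both \cref{eqn:A'expanding} and the identification $\vecw_M=\vecx$.
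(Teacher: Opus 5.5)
Your proof is correct and is the routine verification the paper has in mind. The paper gives no formal proof; it only calls the lemma ``not hard to show,'' after noting in the text that $\nu(\vecx)$ is neutral, dichotomous, and satisfies $\pi(\nu(\vecx))=\vecx$. Your coordinatewise case analysis on the sign of $(A\vecx)_a$, using $t-s=(A\vecx)_a$ together with $st=0$, cleanly supplies the omitted uniqueness half.
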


\subsection{Hilbert Basis}
We briefly recall the definitions and basic properties of Hilbert bases that we need. First, for an integer matrix $B\in \Z^{m\times n}$, the set $C=\{\vecx \in \mathbb R^n: B\vecx \ge \veczero\}$ defined by a system of inequalities is called a (rational polyhedral) cone. If there exists a $\vecc\in \mathbb R^n$ such that $\vecc \cdot \vecx>0$ for all $\vecx \in C\setminus \{\veczero\}$, we say $C$ is a pointed cone.

The following definition of the Hilbert basis will be convenient for this paper.
\begin{definition}[Hilbert basis]
The \emph{Hilbert basis} $H(C)$ of a pointed cone $C$
is a subset of $C \cap \mathbb Z^{n} \setminus \{\veczero\} $ defined as follows:
\[
H(C)= \bigl\{
\vech \in C \cap \mathbb Z^{n} \setminus \{\veczero\} \bigm|\vech\text{ is not the sum of two vectors in }C \cap \mathbb Z^{n} \setminus \{\veczero\}
\bigr\}.
\]
\end{definition}
It is known that the Hilbert basis of any pointed (rational polyhedral) cone is 
finite.
There are several equivalent definitions of the Hilbert basis.
Notably, it is the smallest set that generates $C \cap \mathbb Z^{n}$ by nonnegative integer combinations;
this characterization was previously used to enumerate the minimal set of reactions or polymers \cite{akef2025computing,haley2021computing}.\footnote{Technically, our definition coincides with this usual definition only for the \emph{pointed} cone.}
The computation of the Hilbert basis has been extensively studied in the literature \cite{bruns2010normaliz,hemmecke2002computation,pottier1996euclidean}, and implementations are publicly available \cite{4ti2,Normaliz}.

\subsection{Pareto-Optimal Polymers as Hilbert Basis}
The set of neutral polymers $\mathcal{C}(A')$ is an intersection of $\mathbb Z^{M'}$ and a cone
\begin{equation}\label{eqn:our_pointedcone}
\mathcal{C}_{\mathbb R}(A')  = \bigl\{\vecx \in  \mathbb R^{M'}\bigm|A'\vecx \ge\veczero ,-A'\vecx\ge\veczero, I\vecx=\vecx\ge\veczero\bigr\}
\end{equation}
which is pointed because $\mathbf 1\cdot \vecx>0$ holds for all nonzero $\vecx \in \mathcal{C}_{\mathbb R}(A') $.
We denote $H=H(\mathcal{C}_{\mathbb R}(A'))$ the Hilbert basis of $\mathcal{C}_{\mathbb R}(A')$, which is a subset of $\mathcal{C}(A')=\mathcal{C}_{\mathbb R}(A') \cap \Z^{M'}$ by definition.

Recall $P^*$ is the set of Pareto-optimal polymers in the original system $(\Sigma,M)$.
The following theorem states that the Pareto-optimal polymers are characterized by the projection of Hilbert basis elements $\pi(H):=\{\pi(\vech) \;|\; \vech \in H\}$.
\begin{theorem}\label{thm:HilbertCharacterize}
    $P^*=\pi(H)\setminus \{\veczero\}.$
\end{theorem}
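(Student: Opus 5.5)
The plan is to prove the two inclusions separately. Both directions rest on \cref{lem:unique_lifting}, which identifies $\nu(\vecx)$ as the unique dichotomous neutral lift of $\vecx$. Both also use the following additivity observation for a split $\vecp=\vecp_1+\vecp_2$. For each $a\in D$ we have $(A\vecp)_a=(A\vecp_1)_a+(A\vecp_2)_a$. Consequently
\[
\max(\pm(A\vecp)_a,0)=\max(\pm(A\vecp_1)_a,0)+\max(\pm(A\vecp_2)_a,0)
\]
holds for both signs if and only if $(A\vecp_1)_a$ and $(A\vecp_2)_a$ do not have strictly opposite signs. Hence $\nu(\vecp)=\nu(\vecp_1)+\nu(\vecp_2)$ holds exactly when $\vecp_1,\vecp_2$ are not complementary.

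\emph{Inclusion $P^*\subseteq\pi(H)\setminus\{\veczero\}$.} Take $\vecp\in P^*$. Since $\pi(\nu(\vecp))=\vecp\neq\veczero$, it suffices to show $\nu(\vecp)\in H$. Suppose instead that $\nu(\vecp)=\vecw_1+\vecw_2$ with $\vecw_1,\vecw_2\in\mathcal C(A')\setminus\{\veczero\}$.

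First, each $\vecw_i$ is dichotomous, because $\veczero\le\vecw_i\le\nu(\vecp)$ coordinatewise and $\nu(\vecp)$ is dichotomous.

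Second, $\pi(\vecw_i)\neq\veczero$. Suppose $\pi(\vecw_1)=\veczero$. Then $\vecw_1$ consists only of unit monomers, so neutrality via \cref{eqn:A'expanding} forces $(\vecw_1)_{u_a}=(\vecw_1)_{u_{a^*}}$ for every $a$. Since $\vecw_1\neq\veczero$, some pair has both counts positive, which contradicts dichotomy.

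So $\vecp_i:=\pi(\vecw_i)$ are polymers with $\vecp=\vecp_1+\vecp_2$. By \cref{lem:unique_lifting}, $\vecw_i=\nu(\vecp_i)$, so $\nu(\vecp)=\nu(\vecp_1)+\nu(\vecp_2)$. Suppose $\vecp_1,\vecp_2$ were complementary at some $a$, say $(A\vecp_1)_a>0>(A\vecp_2)_a$. Then $\nu(\vecp_1)_{u_{a^*}}>0$ and $\nu(\vecp_2)_{u_a}>0$. Their sum $\nu(\vecp)$ would then have both coordinates $u_a,u_{a^*}$ positive, contradicting dichotomy of $\nu(\vecp)$. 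So the split is non-complementary, contradicting Pareto-optimality of $\vecp$.

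\emph{Inclusion $\pi(H)\setminus\{\veczero\}\subseteq P^*$.} Take $\vech\in H$ with $\vecp:=\pi(\vech)\neq\veczero$.

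First show $\vech$ is dichotomous. If $\vech_{u_a},\vech_{u_{a^*}}\ge1$ for some $a$, write $\vech=(\vece_{u_a}+\vece_{u_{a^*}})+\vech'$. The first summand is a nonzero neutral vector. The remainder $\vech'$ is neutral, nonnegative, and nonzero, since $\pi(\vech')=\vecp\neq\veczero$. This contradicts $\vech\in H$. Hence $\vech$ is dichotomous, and \cref{lem:unique_lifting} gives $\vech=\nu(\vecp)$.

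Now suppose $\vecp$ were Pareto-suboptimal, with a non-complementary split $\vecp=\vecp_1+\vecp_2$ into polymers. By the additivity observation, $\nu(\vecp)=\nu(\vecp_1)+\nu(\vecp_2)$. Both summands are nonzero (they project to $\vecp_i\neq\veczero$) and lie in $\mathcal C(A')$. This contradicts $\vech\in H$.

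The only delicate step is the coordinatewise bookkeeping in the first inclusion. There one must rule out summands supported only on unit monomers, and turn complementarity into a violation of dichotomy. Both are handled by the fact that summands of a dichotomous vector are dichotomous. The rest is direct application of \cref{lem:unique_lifting}.
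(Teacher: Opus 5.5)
Your proof is correct and follows essentially the same route as the paper. Both directions rest on showing that Hilbert basis elements are dichotomous, hence equal to $\nu(\pi(\vech))$ by \cref{lem:unique_lifting}, and on the fact that $\nu$ is additive exactly over non-complementary splits. Your small deviations are all sound: ruling out $\pi(\vecw_i)=\veczero$ via dichotomy of a unit-only neutral vector, deriving non-complementarity from $\vecw_i=\nu(\vecp_i)$ instead of the paper's coordinatewise sign bookkeeping, and explicitly checking that $\vech-\vece_{u_a}-\vece_{u_{a^*}}$ is nonzero, a point the paper leaves implicit.
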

The proof is deferred to \cref{app:proofofHilbertCharacterize}.
A few implications are in order. First, this shows that $P^*$ is finite because of the finiteness of the Hilbert basis. Furthermore, we can, in principle, compute the set $P^*$ using the well-established Hilbert basis algorithms.
We discuss the limitations of using this theorem directly in \cref{app:failed}.

\section{Scalable Enumeration with Bounded Support}
\label{sec:covering}
We present scalable algorithms to compute a meaningful subset of Pareto-optimal polymers.
To scale the enumeration of Pareto-optimal polymers, we consider the subset of polymers with a bounded number of monomer types.
We define the \emph{support} of polymer $\vecp \in \N^M$ by
\[
  \supp(\vecp) \;=\; \{\, m \in M \mid \vecp_m > 0 \,\},
\]
and for a user-chosen parameter $t$, the \emph{$t$-bounded Pareto-optimal set} by
\[
  P^*_t \;=\; \{\, \mathbf{x} \in P^* \mid |\mathrm{supp}(\mathbf{x})| \leq t
  \,\}.
\]

We also consider a similarly restricted subset of $P^*$ based on the number of domain pairs instead of monomers. This introduces the domain-support mode, which proved ineffective in our numerical experiments. One reason is that many polymers have a larger number of domain pairs than distinct monomers. In other words, the domain mode may be advantageous when the number of domain pairs is smaller than the number of monomers, or when many monomers share the same domain pairs.
We outline this approach in \cref{sec:domainmode}.

In \cref{subsec:enum_all}, we show how to compute the $t$-bounded Pareto-optimal set through the computations of smaller Hilbert bases. However, the number of Hilbert bases to compute is large, making the naive algorithm less appealing.
Instead, we introduce our main algorithm in \cref{subsec:enum_covering}, which optimizes the overall time complexity by reducing
the number of Hilbert basis computations using covering designs.
In numerical experiments of \cref{sec:benchmarking}, we will see that the equilibrium concentration analysis with the subset $P^*_t$ is almost as good as the one with the full set $P^*$ for many applications.

\subsection{Enumeration with All \texorpdfstring{$t$}{t}-sized Supports}\label{subsec:enum_all}

We introduce a few pieces of mathematical notation that will streamline the arguments to follow.
Let $S\subseteq M$ be a subset of monomers with $|S| = t$.
The set $(\Sigma,S)$ gives a natural domain-monomer (sub)system.
Define an \emph{embedding} $\iota_S:\N^S \to \N^M$ that appends $0$ on each entry corresponding to $M\setminus S$.
After embedding, the set of polymers becomes
\[\Supp(S)=\{\vecp \in \N^{M}:\supp(\vecp)\subseteq S\}.\]
It is not hard to see that if $\vecp \in \mathrm{Supp}(S)$ admits a split $\vecp = \vecp_1 + \vecp_2$ with nonzero $\vecp_1, \vecp_2 \in \mathbb{N}^M$, then $\vecp_1, \vecp_2 \in \mathrm{Supp}(S)$ as well, since the non-negativity of entries in $\vecp$ forces $\mathrm{supp}(\vecp_1), \mathrm{supp}(\vecp_2) \subseteq \mathrm{supp}(\vecp) \subseteq S$.

Define $P_S^* = P^* \cap \Supp(S)$ as the set of Pareto-optimal polymers whose support is contained in $S$.
The following shows the connection between the Pareto-optimal polymers in the full system $(\Sigma,M)$ and subsystem $(\Sigma,S)$.
Intuitively, Pareto-optimality of the polymers supported only on $S$ can be checked entirely inside the subsystem $(\Sigma,S)$, as any split of such a polymer is automatically
also supported on $S$.
We defer the formal proof to \cref{app:proofcovring}.
\begin{lemma}\label{lem:Paretocharacter_embedding}
    $P_S^*$ is the set of Pareto-optimal polymers of the system $(\Sigma,S)$ after embedding.
\end{lemma}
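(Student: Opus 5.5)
The plan is to show that the embedding $\iota_S$ is a bijection from the polymers of the subsystem $(\Sigma,S)$ onto the polymers of $(\Sigma,M)$ supported on $S$, and that this bijection preserves Pareto-optimality in both directions. The image claim is immediate: $\iota_S$ is injective, and its image on $\N^S\setminus\{\veczero\}$ is exactly $\Supp(S)\setminus\{\veczero\}$. The statement then reduces to the following: for $\vecq\in\N^S\setminus\{\veczero\}$, the polymer $\vecq$ is Pareto-optimal in $(\Sigma,S)$ if and only if $\iota_S(\vecq)$ is Pareto-optimal in $(\Sigma,M)$.

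First I will record that net domain vectors are compatible with the embedding. The monomer matrix $A_S$ of the subsystem consists of the columns of $A$ indexed by $S$, so $A\,\iota_S(\vecx)=A_S\vecx$ for every $\vecx\in\N^S$. Hence two polymers $\vecq_1,\vecq_2$ of the subsystem are complementary in $(\Sigma,S)$ if and only if $\iota_S(\vecq_1)$ and $\iota_S(\vecq_2)$ are complementary in $(\Sigma,M)$. This uses only that the domain set $\Sigma$, and therefore the index set $D$, is shared by both systems.

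I will then compare splits, arguing by contraposition in each direction. Suppose $\vecq$ is Pareto-suboptimal in the subsystem, witnessed by a split $\vecq=\vecq_1+\vecq_2$ into non-complementary polymers. Since $\iota_S$ is additive, $\iota_S(\vecq)=\iota_S(\vecq_1)+\iota_S(\vecq_2)$, and this split is non-complementary by the previous step, so $\iota_S(\vecq)$ is Pareto-suboptimal. Conversely, suppose $\vecp=\iota_S(\vecq)$ has a non-complementary split $\vecp=\vecp_1+\vecp_2$ with $\vecp_1,\vecp_2\in P$. By the observation made just before the lemma, nonnegativity forces $\vecp_1,\vecp_2\in\Supp(S)$, so $\vecp_i=\iota_S(\vecq_i)$ for some nonzero $\vecq_i\in\N^S$. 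Injectivity and additivity give $\vecq=\vecq_1+\vecq_2$, and the net-vector compatibility shows that this split is non-complementary in $(\Sigma,S)$.

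No step here is a real obstacle. The only point needing care is the converse direction, where the split in the big system must be pulled back to the subsystem. That is exactly where the support observation, $\supp(\vecp_i)\subseteq\supp(\vecp)\subseteq S$, is used, so I will state it explicitly there.
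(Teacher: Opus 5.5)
Your proposal is correct and follows the same route as the paper's proof: one direction pushes a non-complementary split in $(\Sigma,S)$ forward via $\iota_S$, and the other pulls a split in $(\Sigma,M)$ back using $\supp(\vecp_i)\subseteq\supp(\vecp)\subseteq S$. The only difference is that you state the identity $A\,\iota_S(\vecx)=A_S\vecx$ explicitly, whereas the paper uses it implicitly when it treats complementarity as the same notion in both systems.
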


In particular, this lemma implies the following inclusion:
\begin{equation}\label{eqn:inclusion}
    P_S^* \subseteq P_T^* \text{ if } S \subseteq T.
\end{equation}
Given this, we compute a subset of Pareto-optimal polymers, namely $P^*_S$, by computing a much smaller Hilbert basis.
Formally, we define the set of neutral polymers restricted to $S$ by
\[
  c^S(A') \;=\; \bigl\{\, \mathbf{x} \in \mathbb{N}^{M'}
  \;\bigm|\; A'\mathbf{x} = \mathbf{0},~~\supp(\pi(\vecx))\subseteq S \,\bigr\},
\]
and the corresponding pointed cone by
\begin{equation}\label{eqn:pointed_smaller}
    c^S_{\mathbb R}(A') \;=\; \bigl\{\, \mathbf{x} \in \mathbb{R}^{M'}
  \;\bigm|\; A'\mathbf{x} = \mathbf{0},\vecx \ge \veczero ,\vecx_m=0 \text{ for }m\in M\setminus S\,\bigr\}
\end{equation}
following \cref{eqn:our_pointedcone}.
We denote $H_S=H(c^S_{\mathbb R}(A'))$ the Hilbert basis of $c^S_{\mathbb R}(A')$. Applying \cref{thm:HilbertCharacterize} on the system $(\Sigma,S)$, we have the following Hilbert basis characterization.
\begin{lemma}\label{lem:union_naive}
    $P_S^* = \pi(H_S)\setminus \{\veczero\}$. In particular, $P_t^* = \bigcup_{\substack{S \subseteq M \\ |S| = t}} \left(\pi(H_S) \setminus\{\veczero\}\right).$
\end{lemma}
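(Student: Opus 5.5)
The plan is to apply \cref{thm:HilbertCharacterize} to the subsystem $(\Sigma,S)$ and transport the result to the full system through the embedding. Combining this with \cref{lem:Paretocharacter_embedding} gives the first identity. Covering all supports of size at most $t$ by supports of size exactly $t$ then gives the second.

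\textbf{Step 1: the subsystem's augmented cone equals $c^S_{\mathbb R}(A')$.} The augmented subsystem has monomer set $S' = S \cup \{u_a,u_{a^*}\}$ and matrix $A'_S$, the columns of $A'$ indexed by $S'$. Let $\iota'_S:\R^{S'}\to\R^{M'}$ be the zero-padding embedding. Then $\iota'_S$ maps $\mathcal C_{\mathbb R}(A'_S)$ bijectively and linearly onto $c^S_{\mathbb R}(A')$. Indeed, for $\vecx$ vanishing outside $S'$ we have $A'\iota'_S(\vecx)=A'_S\vecx$, and the constraints $\vecx\ge\veczero$, $\vecx_m=0$ for $m\notin S$ match exactly. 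This bijection restricts to the lattice points. It also preserves the relation ``is a sum of two nonzero lattice points of the cone'', because if $\vech\in c^S_{\mathbb R}(A')$ splits as $\vech = \vech_1+\vech_2$ with nonnegative parts, both parts must vanish off $S'$. Hence $H_S=\iota'_S(H(\mathcal C_{\mathbb R}(A'_S)))$. Since projection commutes with embedding, $\pi\circ\iota'_S=\iota_S\circ\pi_S$.

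\textbf{Step 2: the first identity.} By \cref{thm:HilbertCharacterize} applied to $(\Sigma,S)$, the subsystem's Pareto-optimal set is $\pi_S(H(\mathcal C_{\mathbb R}(A'_S)))\setminus\{\veczero\}$. Embedding with $\iota_S$ and using Step 1 turns this into $\pi(H_S)\setminus\{\veczero\}$. \Cref{lem:Paretocharacter_embedding} says the embedded Pareto-optimal set of the subsystem is $P_S^*$, so $P_S^*=\pi(H_S)\setminus\{\veczero\}$. One should also check that the standing assumption (no unit monomers in the monomer set) is inherited by $S$, which is immediate since $S\subseteq M$.

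\textbf{Step 3: the union formula.} If $\vecx\in P_t^*$, then $|\supp(\vecx)|\le t$, so we can pick $S\supseteq\supp(\vecx)$ with $|S|=t$ (assuming $t\le|M|$; otherwise take $S=M$). This gives $\vecx\in P^*\cap\Supp(S)=P_S^*$. Conversely, every element of $P_S^*$ with $|S|=t$ lies in $P^*$ and has support of size at most $t$, so it lies in $P_t^*$. Therefore $P_t^*=\bigcup_{|S|=t}P_S^*$, and substituting Step 2 gives the stated union.

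The only delicate point is Step 1. One must verify that the Hilbert basis of the face-like subcone $c^S_{\mathbb R}(A')$ is the embedded Hilbert basis of the subsystem. This rests on the observation that the decompositions allowed in the definition of $H$ cannot leave the coordinate subspace, because all vectors involved are nonnegative. Everything else is bookkeeping.
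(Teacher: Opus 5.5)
Your proof is correct and follows the same route as the paper. You strip the always-zero coordinates outside $S$, and your zero-padding bijection $\iota'_S$ makes precise why $H_S$ is the subsystem's Hilbert basis. You then apply \cref{thm:HilbertCharacterize} to $(\Sigma,S)$ together with \cref{lem:Paretocharacter_embedding} (which the paper uses only implicitly), and enlarge each support of size at most $t$ to a $t$-subset.

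One small point on your aside about $t>|M|$: there the union over $|S|=t$ is empty, so the formula genuinely presupposes $t\le|M|$ (as the paper tacitly assumes), and taking $S=M$ does not rescue it.
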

The proof is deferred to \cref{app:proofcovring}.
This lemma says that each support-restricted Pareto-optimal set can be computed by applying the Hilbert-basis characterization to the subsystem on $S$, and the $t$-bounded set is obtained by taking the union over these.

For $|S|\ll |M|$, computing a single Hilbert basis $H_S$ can be done much more efficiently than computing the full Hilbert basis $H$.
The effective ambient space of the pointed cone in \cref{eqn:pointed_smaller} is much smaller than that of the original cone in \cref{eqn:our_pointedcone}, since we can ignore the always-zero original-monomer coordinates for $m\in M\setminus S$. More precisely, the dimensions are $|S|+|\Sigma|$ and $|M|+|\Sigma|$, respectively. Some domains may not appear in the monomers in $S$, so the dimension can be further reduced.
Let $D_S$ be the set of domains that appear in the monomers in $S$. Let $M_S'=S \cup \{u_a,u_{a^*}:a\in D_S\}$.
We also reduce the number of equations defining the cone by considering $A'_S\in \Z^{D_S \times M_S'}$, obtained by removing irrelevant rows and columns. 
This significantly reduces the number of input constraints in computing $H_S$.

By \cref{lem:union_naive}, we can compute the $t$-bounded Pareto-optimal set $P_t^*$ by multiple smaller Hilbert bases $H_S$ for $|S|=t$.
This naive subset enumeration algorithm requires one invocation of Hilbert basis computation per $t$-subset of $M$, giving a total of $\binom{n}{t}$ invocations. The factor $\binom{n}{t}$ grows rapidly with $n$ and $t$, making naive enumeration infeasible for moderate $n$ and $t$.
In the next subsection, we explore a better strategy for selecting which subsets to run Hilbert basis computation on in order to reduce the total time.

\subsection{Enumeration with Covering Design}\label{subsec:enum_covering}
The starting point of this section is \cref{eqn:inclusion}.
This inclusion implies that if we compute a Hilbert basis $H_T$ for some $T$, then we learn \emph{all} $H_S$ for all $S \subset T$. Instead of computing all $H_S$ for $|S|=t$, we compute a small number of $H_T$ with $|T|=k$ for some $k>t$.
To minimize the complexity, we want to minimize the number of different $T$'s.

The \emph{minimum covering design problem} addresses this problem. This problem asks for a collection $\mathcal{B}$ of $k$-element subsets (blocks) of $M$ of minimum cardinality such that every $t$-element subset of $M$ is contained in at least one block. More formally:
\begin{definition}[Covering design]
Let $M$ be a finite set of size $n$.
A $(n,k,t)$-\emph{covering design} $\mathcal D$ is a collection of $k$-element subsets of $M$ such that for every $t$-element subset $T \subseteq M$, there exists a
block $B \in \mathcal D$ with $T \subseteq B$. The \emph{covering number} $c(n,k,t)$ denotes the minimum cardinality of any such collection.
\end{definition}

Let $\mathcal D$ be a $(|M|,k,t)$-covering design. Given $\mathcal D$, the \emph{covering design strategy} runs
the Hilbert basis algorithm on each block $B \in \mathcal D$ and computes
\[
  \hat{P}_{k,t} \;=\; \bigcup_{B \in \mathcal D} \left( \pi(H_B)\setminus \{\veczero\}\right).
\]
The following theorem shows that the covering design strategy computes (a superset of) the $t$-bounded Pareto-optimal set. The proof can be found in \cref{app:proofcovring}.
\begin{theorem}
\label{thm:covering}
$P^*_t \subseteq \hat{P}_{k,t} \subseteq P^*$.
\end{theorem}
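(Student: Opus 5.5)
The plan is to prove the two inclusions separately. Both follow almost directly from \cref{lem:union_naive} and \cref{eqn:inclusion}; the covering property of $\mathcal D$ enters only in the first.

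For the upper inclusion $\hat{P}_{k,t} \subseteq P^*$, take any block $B \in \mathcal D$. By the first part of \cref{lem:union_naive}, $\pi(H_B)\setminus\{\veczero\} = P_B^*$. By definition, $P_B^* = P^* \cap \Supp(B) \subseteq P^*$. Taking the union over all blocks gives $\hat{P}_{k,t} \subseteq P^*$.

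For the lower inclusion $P^*_t \subseteq \hat{P}_{k,t}$, fix $\vecx \in P^*_t$, so $\vecx \in P^*$ and $|\supp(\vecx)| \le t$. First extend $\supp(\vecx)$ to a set $S \subseteq M$ with $|S| = t$ and $\supp(\vecx) \subseteq S$. This is possible whenever $t \le |M|$, which we assume implicitly, as the covering design is only meaningful in that case. Then $\vecx \in P^* \cap \Supp(S) = P_S^*$. By the covering-design property there is a block $B \in \mathcal D$ with $S \subseteq B$. By \cref{eqn:inclusion}, $P_S^* \subseteq P_B^*$, so $\vecx \in P_B^* = \pi(H_B)\setminus\{\veczero\} \subseteq \hat{P}_{k,t}$.

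We do not expect a real obstacle. The only points needing care are the padding step when $|\supp(\vecx)| < t$, which is why we need $t \le |M|$, and the use of \cref{lem:union_naive} for blocks of size $k$ rather than $t$. The first statement of that lemma holds for arbitrary $S \subseteq M$, since it comes from \cref{thm:HilbertCharacterize} applied to the subsystem $(\Sigma,B)$ together with \cref{lem:Paretocharacter_embedding}. We will state this generality explicitly when invoking it.
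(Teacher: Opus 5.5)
Your proposal is correct and follows essentially the same argument as the paper. The upper inclusion comes from $\pi(H_B)\setminus\{\veczero\}=P_B^*\subseteq P^*$. The lower inclusion comes from placing $\vecx$ inside some $P_S^*$ with $|S|=t$, choosing a block $B\supseteq S$, and applying \cref{eqn:inclusion}; your explicit padding step and your remark that the first part of \cref{lem:union_naive} holds for arbitrary $S$ make precise what the paper uses implicitly.
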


In the covering-design strategy, using a covering design $\mathcal D$ requires $|\mathcal D|$ Hilbert basis computations. The Sch\"{o}nheim lower bound~\cite{schonheim1964coverings} gives
\[
  c(n,k,t) \;\geq\;
  \left\lceil \frac{n}{k}
  \left\lceil \frac{n-1}{k-1} \cdots
  \left\lceil \frac{n-t+1}{k-t+1} \right\rceil \cdots
  \right\rceil \right\rceil .
\]
In practice, covering designs are vastly smaller than the family of all $k$-subsets, whose cardinality is $\binom{n}{k}$. This reduction makes the covering-design strategy substantially more attractive.
However, computing the minimal covering design or computing $c(n,k,t)$ is known to be hard \cite{crescenzi2004optimal}.
We use precomputed near-optimal covering designs, or greedy constructions if these are unavailable. Further details are described in the next section.

\section{Implementation}
\label{sec:implementation}
All code is available at \href{https://github.com/architrahul/Pareto-polymer-enumerator}{github.com/architrahul/Pareto-polymer-enumerator}. We leverage the modern Hilbert basis algorithm \Normaliz \cite{Normaliz}.

\begin{algorithm}
\caption{Covering Design Strategy}
\label{alg:covering}
\begin{algorithmic}[1]
\Require Domain-monomer system $(\Sigma, M)$ with $|M| = n$, parameters $t$ and $k$ with $t \leq k \leq n$
\Ensure $\hat{P}_{k,t}$
\State $\hat{P} \leftarrow \emptyset$
\State Fetch or construct $(n,k,t)$-covering design $\mathcal{D}_{n,k,t}$
\For{each block $B \in \mathcal{D}_{n,k,t}$}
    \State Construct subsystem $(\Sigma, B)$
    \State Compute Hilbert basis $H_B$ using \Normaliz
    \State $\hat{P} \leftarrow \hat{P} \cup \left( \pi(H_B) \setminus \{0\} \right)$
\EndFor
\State \Return $\hat{P}$
\end{algorithmic}
\end{algorithm}

For a fixed size of the monomer set $n$ and Pareto-optimal polymer support-bound $t$, the total runtime of the covering design strategy for a given $k$ is determined by the number of \Normaliz invocations $|\mathcal D_{n,k,t}|$
and the per-invocation cost, which depends on the specific sub-system. 

The details of the construction of covering designs and the block-size selection heuristic are described next.

\subsection{Covering Design Construction}
We obtain the covering design given input parameters $(n,k,t)$ as follows: First, when the parameters are sufficiently small ($n < 100$, $k \leq 25$, $t \leq 8$), we fetch the near-optimal covering designs from the precomputed covering design database, the La Jolla Covering Repository~\cite{gordon_2026_19735294}.

When the parameters are outside this range, we construct the covering design by combining smaller covering designs using dynamic programming \cite{gordon1995new}.
This approach produces reasonably small covering designs, although we cannot ensure their near-optimality. This step can be done as preprocessing independent of the target system.

For the applications in this paper, the covering designs from the La Jolla repository with $k\le 25$ are comparably efficient to those with $k>25$ constructed by dynamic programming, as observed in \cref{fig:probe-cascade}.

\subsection{Selecting the Optimal Block Size \texorpdfstring{$k$}{k}}
\label{sec:probe-and-prune}

\noindent
\cref{alg:covering} requires the user to specify two parameters: the support bound $t$ and the block size $k$. The user may choose $t$ depending on the system to be analyzed and its known properties, and increasing the parameter $t$ yields a reduced candidate set~$\hat{P}_{k,t}$ or $P^*_t$ with higher expressive power, at the cost of efficiency.

On the other hand, the choice of $k$ is less obvious. For a fixed $t$, increasing $k$ reduces the number of blocks in the $(|M|,k,t)$-covering design but makes each Hilbert basis computation more expensive. The total runtime is therefore a non-monotone function of $k$.

We address this with a \emph{probing} strategy. For each $k$ and the $(|M|,k,t)$-design $\mathcal D_k$, we can estimate the expected running time of a single Hilbert basis computation for block size $k$ by running a 100-iteration probe.
\[
\hat T(k) = |\mathcal D_k|\times (\text{estimated run time of a single Hilbert basis computation}),
\]
Starting from $k=t$, we estimate the total runtime for increasing values of $k$. We continue increasing $k$ until the estimated runtime has increased for three consecutive values of $k$, and then select the value of $k$ with the smallest estimated runtime observed.

This stopping rule is motivated by the tradeoff between the size of the covering design and the cost of each Hilbert basis computation. Increasing $k$ reduces the number of blocks in the covering design, but it also makes each \Normaliz call more expensive. So, the total runtime is expected to have an intermediate minimum. As shown in \cref{fig:probe-damien,fig:probe-cascade}, once the estimated runtime passes this minimum, it increases monotonically in our tested systems. The probing step introduces only a negligible additive runtime overhead in our applications.

 \cref{fig:probe-damien,fig:probe-cascade} illustrate the runtime as a function of $k$ for the Scaffolded DNA and AND gate linear cascade systems (\cref{sec:tbn-families}), respectively. As observed, the estimated runtime is very close to the actual runtime, and there is an optimal value of $k$ that minimizes the runtime.

\begin{figure}[ht]
  \centering
  \begin{subfigure}[t]{0.48\textwidth}
    \includegraphics[width=\textwidth]{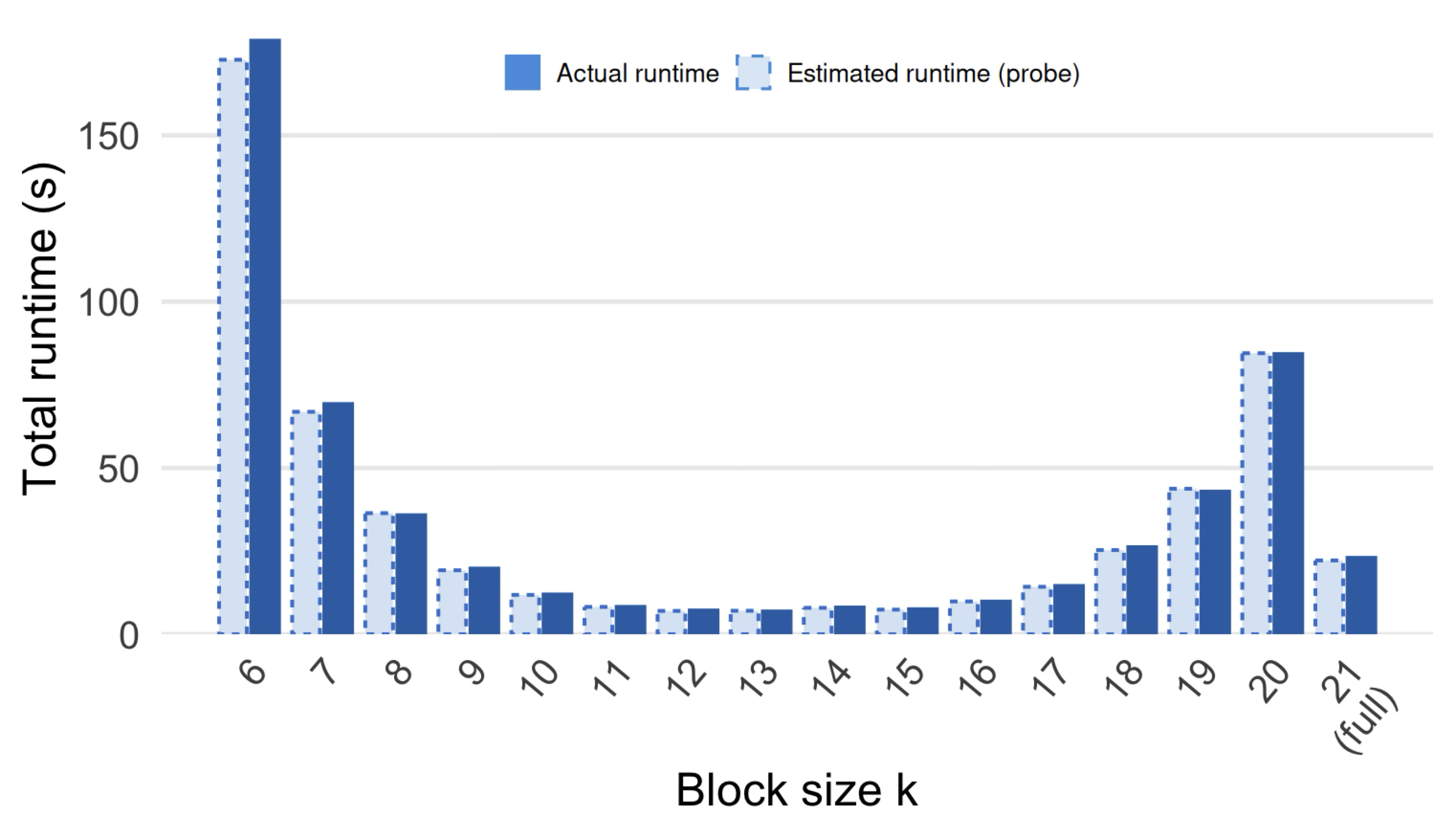}
    \caption{Scaffolded DNA system ($|M|=21$, $t=5$).}
    \label{fig:probe-damien}
  \end{subfigure}
  \begin{subfigure}[t]{0.50\textwidth}
    \includegraphics[width=\textwidth]{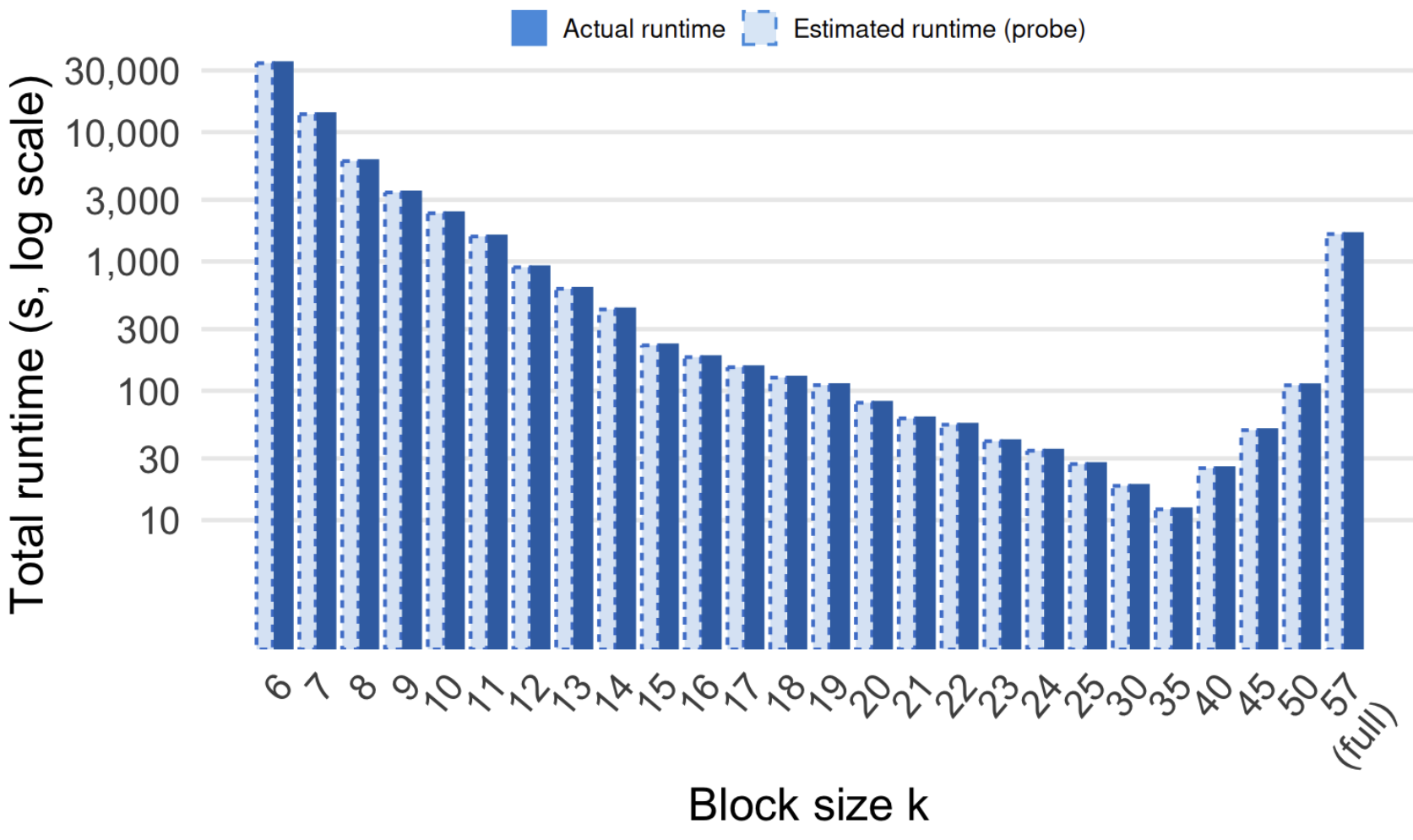}
    \caption{AND gate linear cascade, 7 modules ($|M|=57$, $t=5$).}
    \label{fig:probe-cascade}
  \end{subfigure}

  \caption{Probe-estimated total Hilbert basis runtime as a function of block size~$k$, with support bound fixed at $t=5$. Each point is the product of the per-block trial time (100-iteration probe) and the number of blocks in the covering design. 
  }
  \label{fig:probe-both}
\end{figure}

\section{Benchmarking}
\label{sec:benchmarking}
In this section, we present benchmarks of our new algorithm by applying the covering design strategy on several families of domain-monomer
systems drawn from the thermodynamic binding network and DNA molecular programming
literature~\cite{wang2026molecular,yilmaz2026modularity,sterin2025thermodynamically}.
Our analysis shows that the approach described in \Cref{sec:implementation} can analyze large systems that could not be handled by previous TBN analysis methods or by the naive approach from \cref{thm:HilbertCharacterize}.

We design the experiments to address two key aspects of the covering design approach, namely \emph{scalability} and \emph{equilibrium accuracy}.
Our experiments show that, for many applications,
the covering design strategy with an appropriate value of $t$ and our heuristic choice of $k$ quickly predicts the equilibrium concentration of the system. 
For example, for the AND gate linear cascade system \cite{wang2026molecular} with 7 modules, the covering design approach with $(t,k)=(5,25)$ computes the set $\hat P_{25,5}$ in 23.79s while the full Pareto-optimal set is computed in 1105.05s (\cref{fig:runtime-linear}), achieving a $46\times$ speedup. The concentration analysis with COFFEE shows that the covering design approach does not miss any relevant polymer, as shown in \cref{fig:concentration-error-t5}.

In our experiments, all Hilbert basis computations were performed using \Normaliz 3.11.1.
The Hilbert basis computations dominate the overall time complexity; the other steps, including the heuristic choice of $k$, took relatively little time. Equilibrium concentrations were computed with COFFEE~\cite{COFFEE}. For the AND gate linear cascade experiments, we use two concentration regimes adapted from Figure~S6 of the supplementary materials of Wang et al.~\cite{wang2026molecular}.
In both regimes, the upstream inputs $X_2,\ldots,X_{m+1}$ are supplied at $10\,\mathrm{nM}$.
Regime A sets all non-input monomers to $100\,\mathrm{nM}$.
Regime B uses the same concentrations as Regime A, but additionally supplies $50\,\mathrm{nM}$ of all monomers except the output of the final module.
For each regime, we consider both a correct-output condition, where $X_1$ is supplied at $10\,\mathrm{nM}$, and a leakage condition, where $X_1=0$.\footnote{The theoretical model uses dimensionless mole fractions, but the COFFEE input concentrations are specified in molar units. In the dilute-solution regime considered here, mole fractions and molarities differ by a (temperature-dependent) constant conversion factor.
COFFEE converts between molarity and mole fractions internally.} Each complementary domain bond is assigned energy either $-20\,\mathrm{kcal/mol}$ or $-10\,\mathrm{kcal/mol}$.\footnote{The total energy of a polymer is computed by summing this bond-energy contribution over all complementary domain bonds present in the polymer.}
The value $-20\,\mathrm{kcal/mol}$ represents a strong-binding regime comparable to the saturated regime used by Wang et al.~\cite{wang2026molecular}, while $-10\,\mathrm{kcal/mol}$ gives a weaker-binding system in which unsaturated polymers are more pronounced.\footnote{Although we used uniform bond energies across all domains, our framework does not require this; different domains can have different bond energies.}
All experiments were conducted on a MacBook Pro (Apple M3 Max, 16 cores, 48\,GB RAM).

\subsection{TBN Families}
\label{sec:tbn-families}

To perform numerical experiments and benchmarking, we focused on several families of TBNs.
By ``family,'' we mean a collection of related TBNs in which the same motif is repeated arbitrarily many times.
This parametrization
lets us scale these systems to arbitrarily large sizes, increasing the difficulty of the problem.

 \paragraph*{Linear cascade of AND-gate modules}
The AND-gate module was introduced in prior works \cite{TBNpaper,breik2019computing} and recently realized experimentally as a reversible signal propagation module~\cite{wang2026molecular}.
The linear cascade system~\cite{wang2026molecular} (refer to Figure~S3 of its supplementary materials for a diagram) consists of $m$ sequentially chained AND-gate modules. Each module is a collection of monomers that are normally bonded with each other in a particular way and, in the presence of two input polymers, can rearrange to release an output polymer.
The output of each module serves as one of the two inputs to the next. Ideally, a module's output forms only when both of its inputs are present, implementing an AND-gate computation.
However, unintended reaction pathways cause outputs to form even in the absence of the required inputs, or produce other spurious polymers, a phenomenon known as \emph{leakage}. 
This system is referred to as the AND gate linear cascade system in this paper.

 \paragraph*{Binary tree of AND-gate modules}

 The binary tree system~\cite{wang2026molecular} arranges AND-gate modules in a complete binary tree of depth $d$. Each module operates identically to those in the linear cascade, but here both inputs to every non-leaf module are outputs of two child modules, rather than a mix of primary inputs and upstream outputs. The number of monomers grows exponentially in $d$. This system is referred to as the AND gate binary tree system in this paper.

 \paragraph*{Scaffolded DNA system}
 Sterin et al.~\cite{sterin2025thermodynamically} introduced ``scaffolded DNA computation,'' which obtains the desired output at thermodynamic equilibrium.
 This system differs from the AND module motif in that large polymers are expected: the system consists of a large ``scaffold'' with short monomers bound to it, their number proportional to the length of the scaffold.

 \paragraph*{2-reactant, 2-product reversible signal propagation module (2-2 modules) system}
This module from~\cite{yilmaz2026modularity} behaves similarly to the AND-gate module in the linear cascade, with the difference that each module takes two input monomers and produces two output monomers.
For the benchmarking system, both outputs of a module are used as inputs to the next module in the cascade. It will be referred to as the 2-2 linear cascade in this paper.

\subsection{Runtime Analysis}
\label{sec:runtime}

\cref{fig:runtime-linear,fig:runtime-binary-dna} show the (logarithmic) runtime of our algorithm for the AND gate linear cascade, binary tree system, and 2-2 linear cascade. We vary the system sizes by scaling the depth or the number of modules. In each system, we run the algorithm for various $t$, with the best $k$ chosen by probing (described in \cref{sec:probe-and-prune}).

\begin{figure}[ht]
  \centering
  \includegraphics[width=0.6\textwidth]{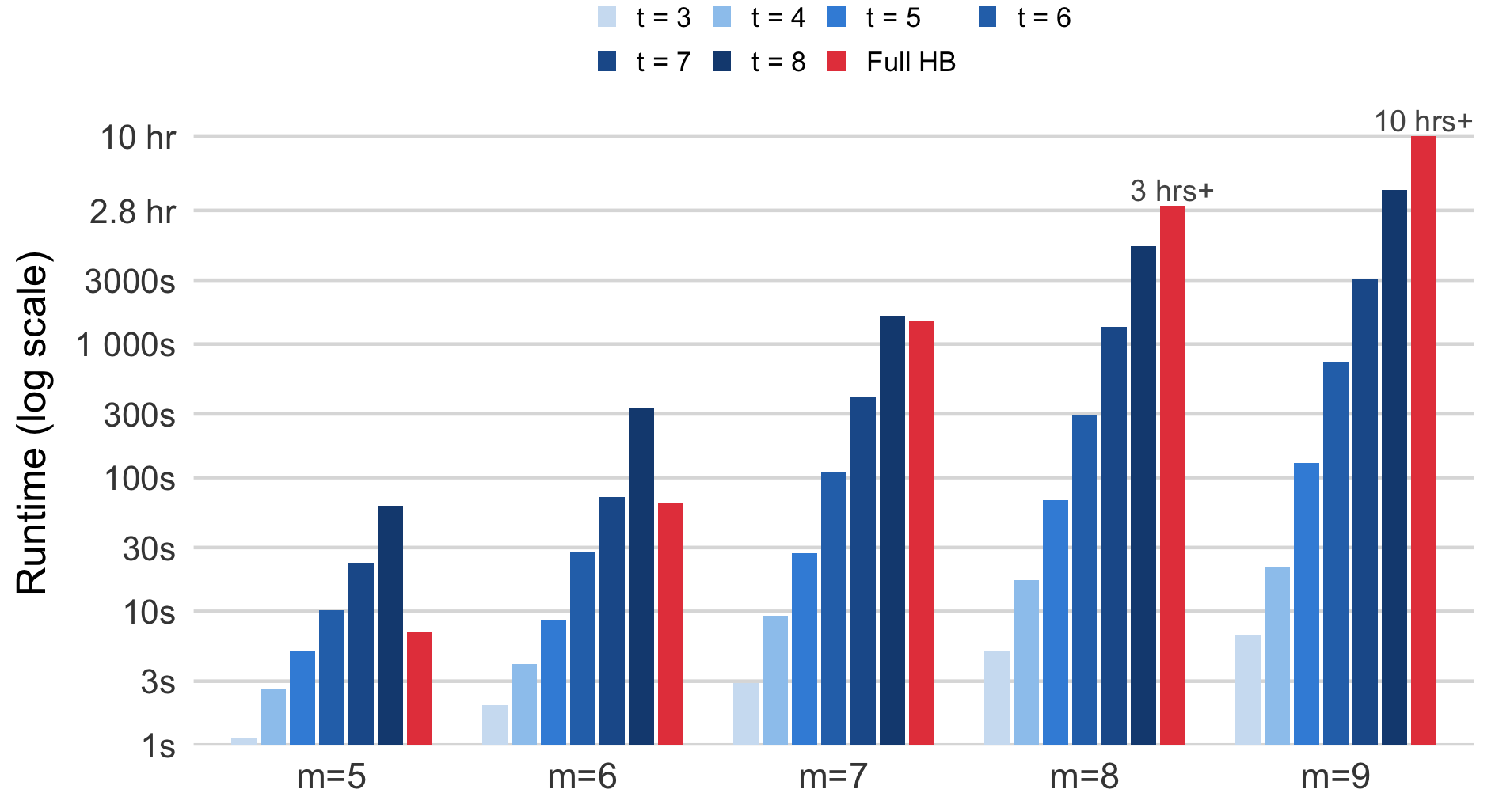}
  \caption{Pareto-optimal polymer set computation time for AND gate linear cascade systems with $m = 5$–$9$ modules. Within each group, bars indicate covering design $t = 3$–$8$ and full Hilbert basis computation (\textsc{Full HB}). Bars labeled \emph{hrs+} denote terminated runs, truncated at the cutoff.
  }
  \label{fig:runtime-linear}
\end{figure}

\begin{figure}[ht]
  \centering
  \includegraphics[width=\textwidth]{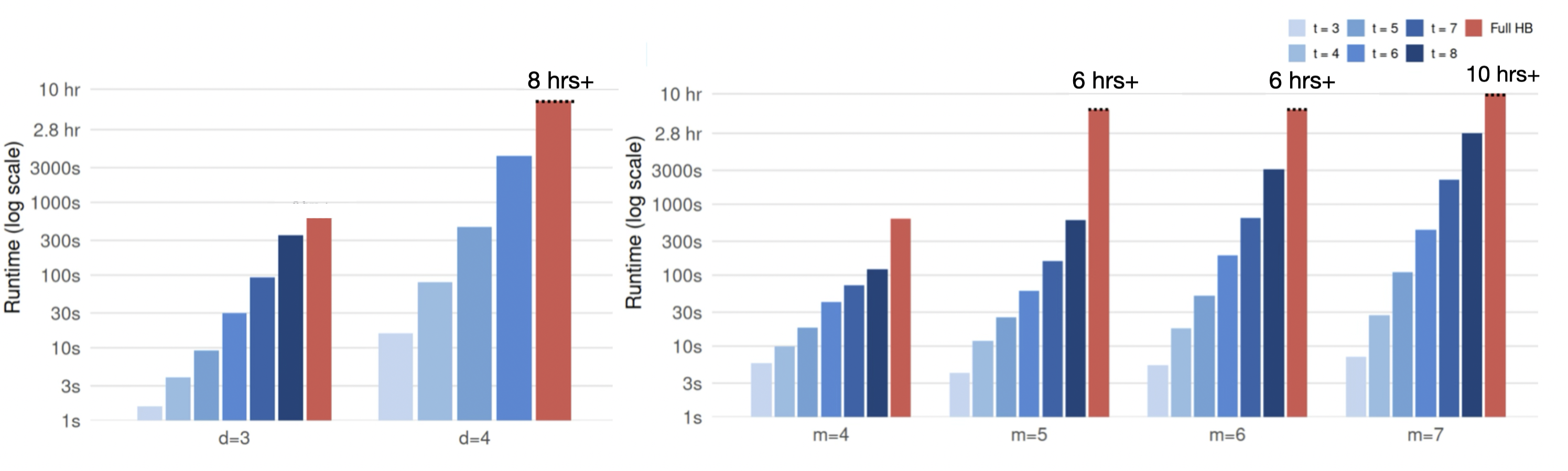}
  \caption{Pareto-optimal polymer set computation time for AND gate binary tree (left) and 2-2 linear cascade (right) TBN systems. Binary tree: depth $d=3,4$, with $t=7,8$ and full HB truncated for $d=4$. 2-2 linear cascade system: $m=4$–$7$, full HB truncated at 6 hours for $m=5,6$ and 10 hours for $m=7$.}
  \label{fig:runtime-binary-dna}
\end{figure}

Across all benchmark families, the covering design strategy provides substantial speedups of several orders of magnitude over full-system Hilbert basis computation (\textsc{Full HB}), which is executed whenever feasible.
The speedup is more drastic for larger systems, while for smaller systems, the covering design strategy with large $t$ shows slower performance.
The Hilbert basis computation dominates the overall runtime for all experiments.

\subsection{Equilibrium Analysis Accuracy}
\label{sec:eqbm-analysis-accuracy}

We compute the equilibrium concentrations using COFFEE~\cite{COFFEE}, given the set of polymers computed using either the covering design approach ($\hat{P}_{k,t}$) or the full Pareto-optimal set ($P^*$). 
For the systems considered in this section, 
the numerical experiments below show that $t=5$ recovers almost all relevant polymers in equilibrium.

We focus on cases where computing the full Pareto-optimal set $P^*$ is feasible to allow comparison with results on $\hat{P}_{k,t}$. 
In particular, we consider the AND gate linear cascade with 7 modules, the AND gate binary tree with depth 3, and the 2-2 linear cascade with 4 modules. The COFFEE analysis with the reduced candidates $\hat P_{k,t}$ for $t=4,\ldots,7$ recovered all significant\footnote{Polymers were categorized as significant if their equilibrium concentration exceeded $0.1\,\mathrm{nM}$ and insignificant otherwise.} polymers from the full set, showing that the support-bounded restriction does not distort the analysis for these systems.

We also consider a seven-module AND gate linear cascade under the Regime B leakage condition with bond energy $E=-20\,\mathrm{kcal/mol}$. Here, the effect of the support bound $t$ becomes visible, as shown in \cref{fig:concentration-error}. For $t=3$, the reduced candidate set misses some equilibrium-relevant polymers and produces significant relative concentration errors for some recovered polymers. In contrast, for $t=5$, all equilibrium-relevant polymers are recovered, and the relative concentration errors are comparatively very low.\footnote{Note that even for $t=3$, only the lower-concentration polymers are missed and the overall equilibrium analysis is still fairly accurate.}

\begin{figure}[ht]
  \centering

  \begin{subfigure}[t]{0.48\textwidth}
    \centering
    \includegraphics[width=\linewidth]{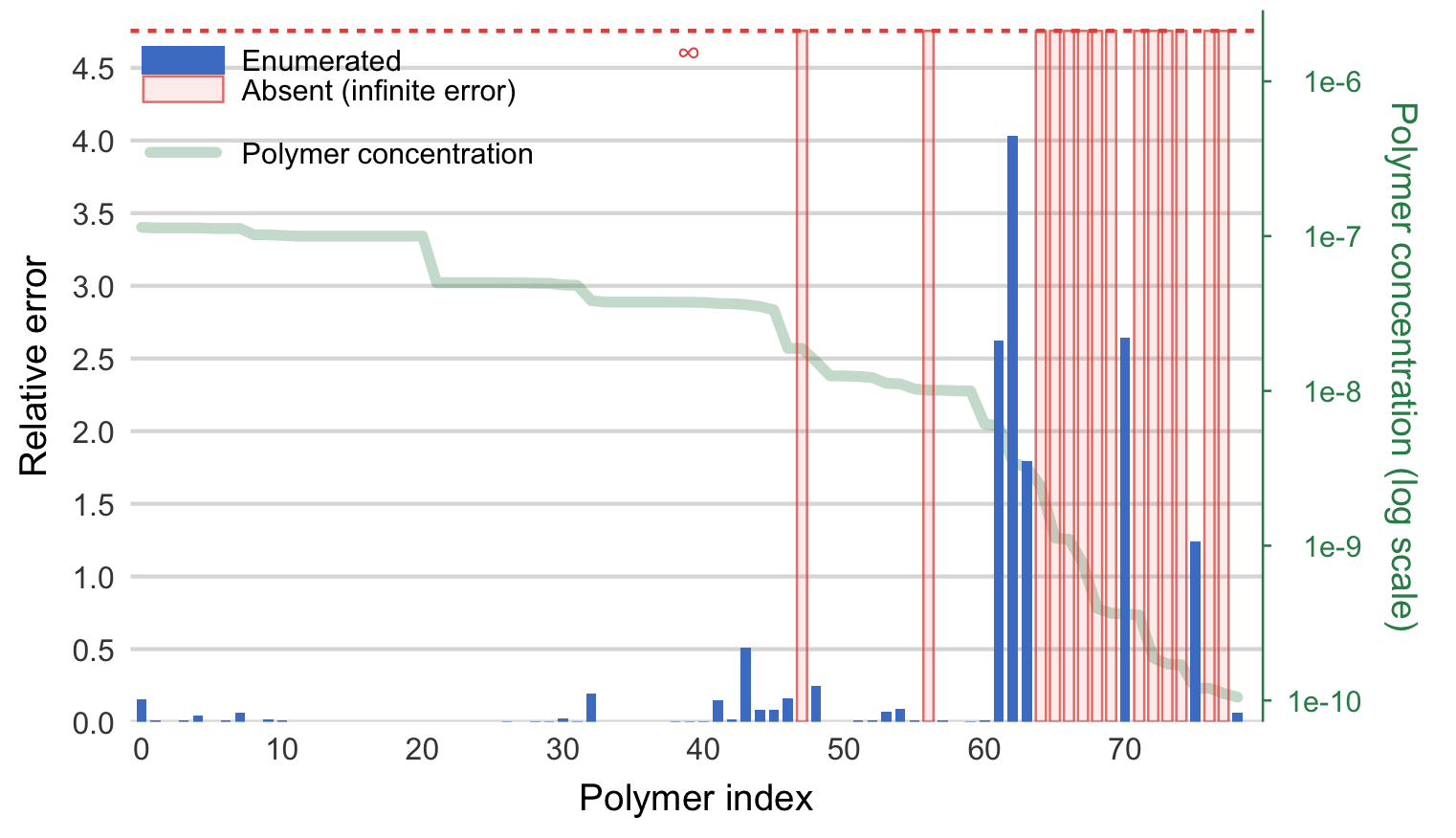}
    \caption{$t=3$ reduced candidate set.}
    \label{fig:concentration-error-t3}
  \end{subfigure}
  \hfill
  \begin{subfigure}[t]{0.48\textwidth}
    \centering
    \includegraphics[width=\linewidth]{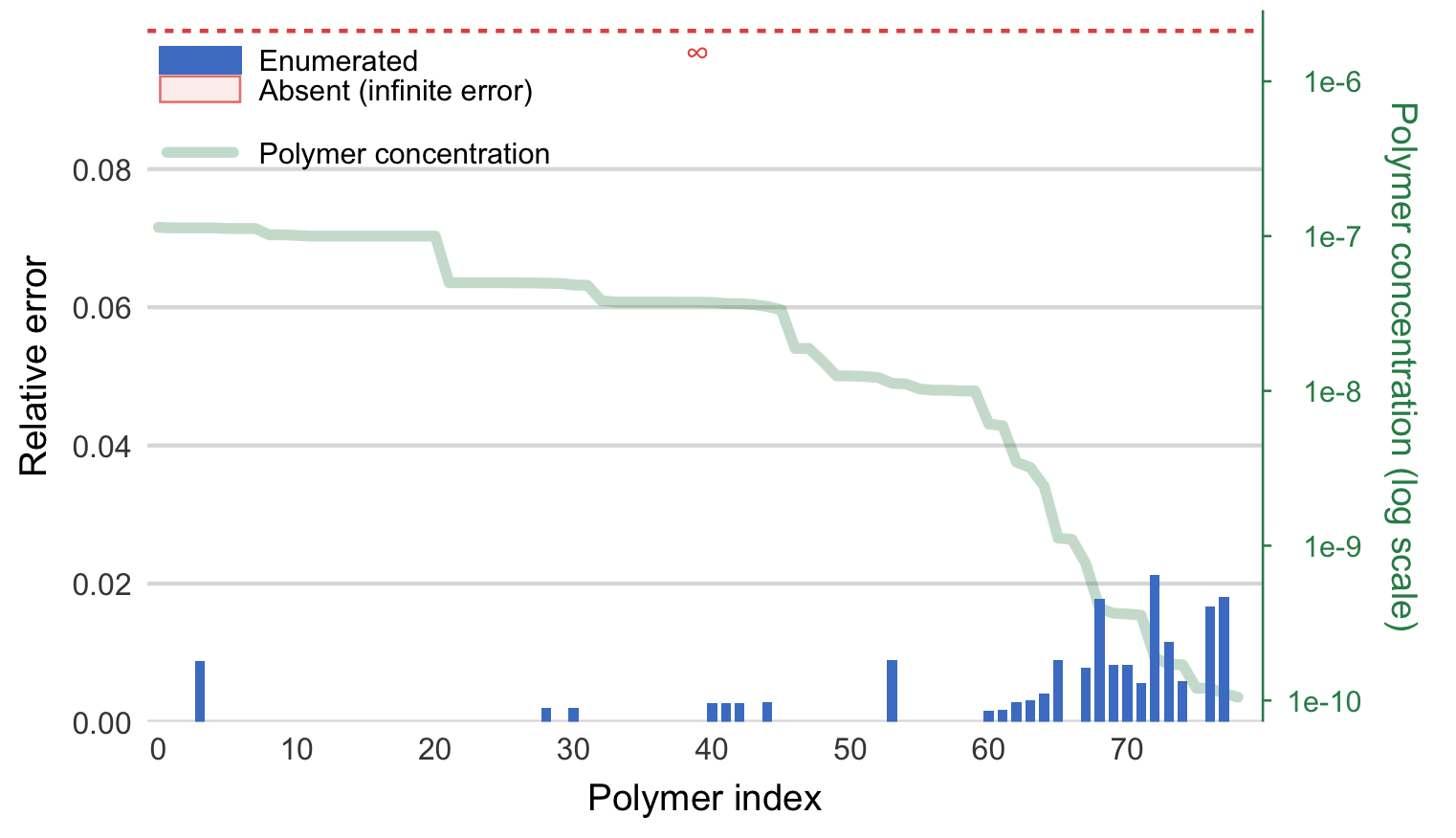}
    \caption{$t=5$ reduced candidate set.}
    \label{fig:concentration-error-t5}
  \end{subfigure}

  \caption{
  Relative concentration error for equilibrium-relevant polymers in the seven-module AND gate linear cascade under the Regime B leakage condition with bond energy $E=-20\,\mathrm{kcal/mol}$.
  Red lines indicate polymers absent from the reduced candidate set, corresponding to infinite relative error; blue lines indicate recovered polymers; the green line shows the polymer concentrations.
  }
  \label{fig:concentration-error}
\end{figure}

\subsection{Leakage Analysis Applications}
\label{sec:leakage-applications}

Beyond equilibrium analysis, the reduced candidate sets can be used for downstream applications as well. One important application is leakage analysis. In engineered TBN systems, leakage refers to the formation of an unintended product through spurious reaction pathways.
We focus on the output leakage for the AND gate linear cascade family here, which is the final output concentration when any one of the inputs is removed.\footnote{The mechanism of an AND gate linear-cascade module is illustrated in Figure~\ref{fig:linear-cascade-module}; a full four-module example is given in Appendix~D.1.}
The output leakage analysis of this system with more modules is out of reach for the previous analytic tools.\footnote{We note that the numerical analysis of~\cite{wang2026molecular} (Section S4.2) restricts to the fully bound (saturated) regime, and even with that simplification only extends up to 10 modules.}

We consider two leakage experiments.
The first experiment compares how much leakage occurs depending on where the input is removed.
We fix a seven-module AND gate linear cascade and remove one input at a time, using Regime B (introduced earlier in \cref{sec:benchmarking}) with bond energy $E=-10\,\mathrm{kcal/mol}$.
For each removed input, the equilibrium concentration of the final output polymer is computed using COFFEE on the candidate set produced by our covering-design enumeration pipeline. \cref{fig:leakage-position} shows the final output concentration variation for this experiment across different values of $E$ and concentration regimes A and B.

\begin{figure}[ht]
    \centering

    \begin{subfigure}[t]{0.48\linewidth}
        \centering
        \includegraphics[width=\linewidth]{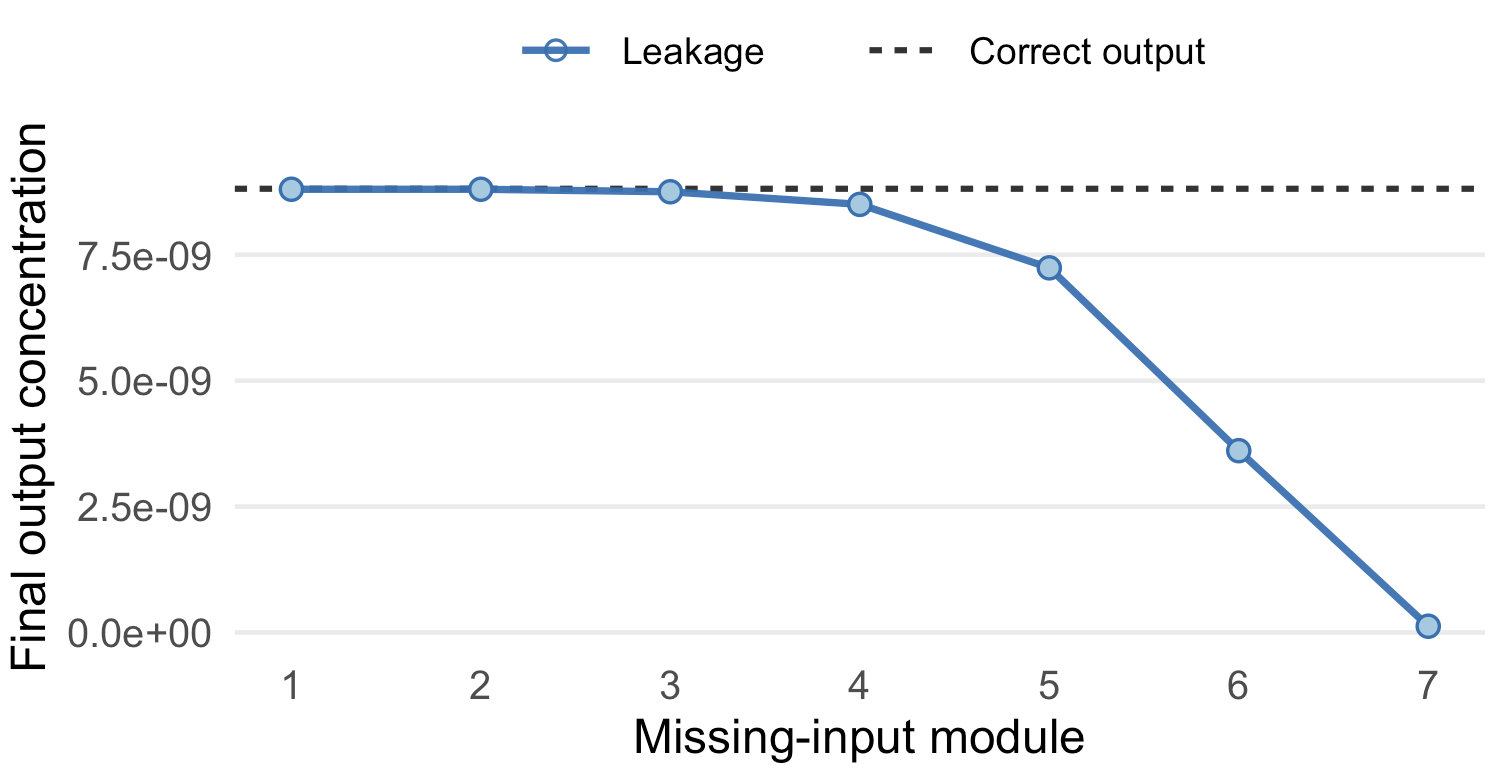}
        \caption{Regime A, $E=-20\,\mathrm{kcal/mol}$}
        \label{fig:leakage-position-regimeA-E20}
    \end{subfigure}
    \hfill
    \begin{subfigure}[t]{0.48\linewidth}
        \centering
        \includegraphics[width=\linewidth]{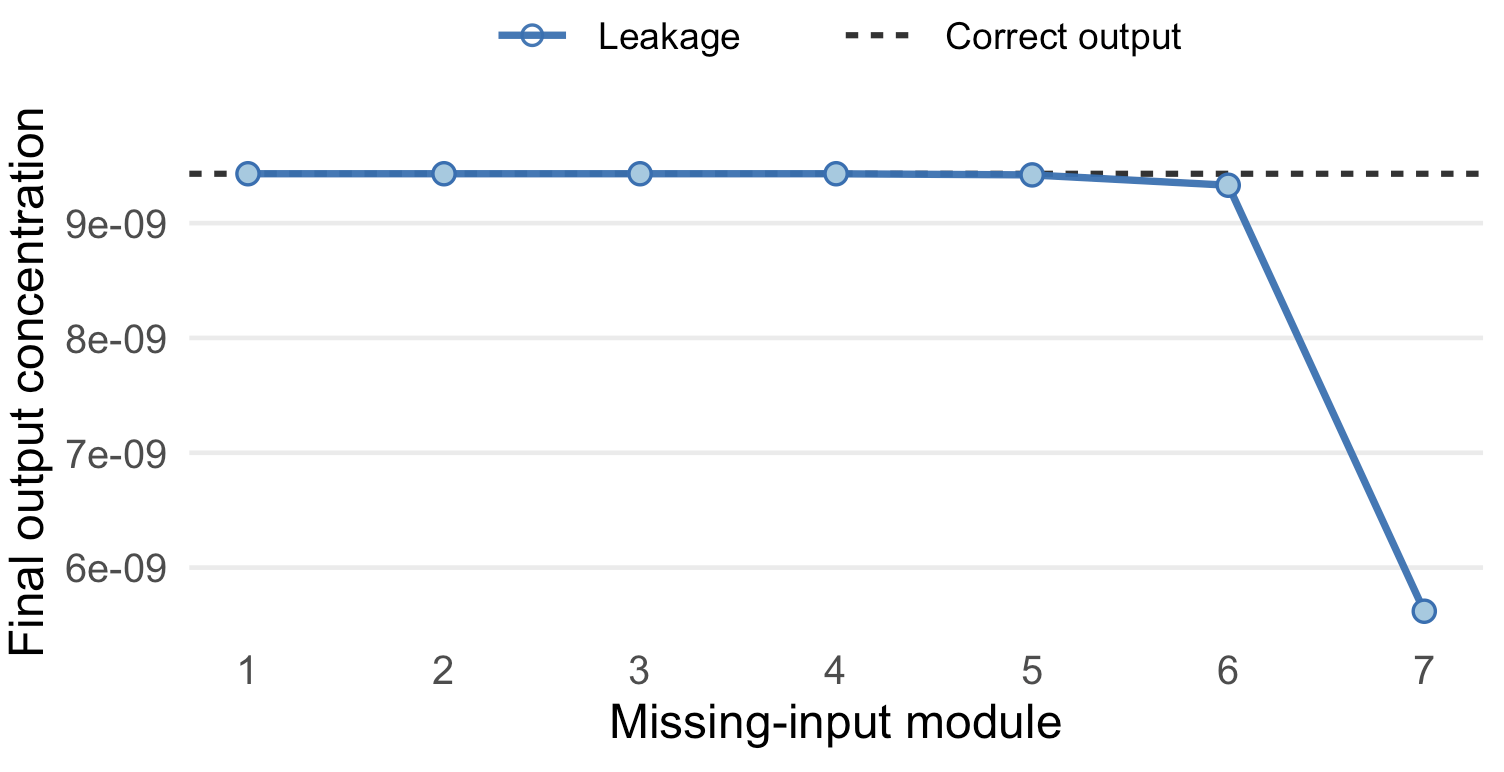}
        \caption{Regime A, $E=-10\,\mathrm{kcal/mol}$}
        \label{fig:leakage-position-regimeA-E10}
    \end{subfigure}

    \vspace{0.6em}

    \begin{subfigure}[t]{0.48\linewidth}
        \centering
        \includegraphics[width=\linewidth]{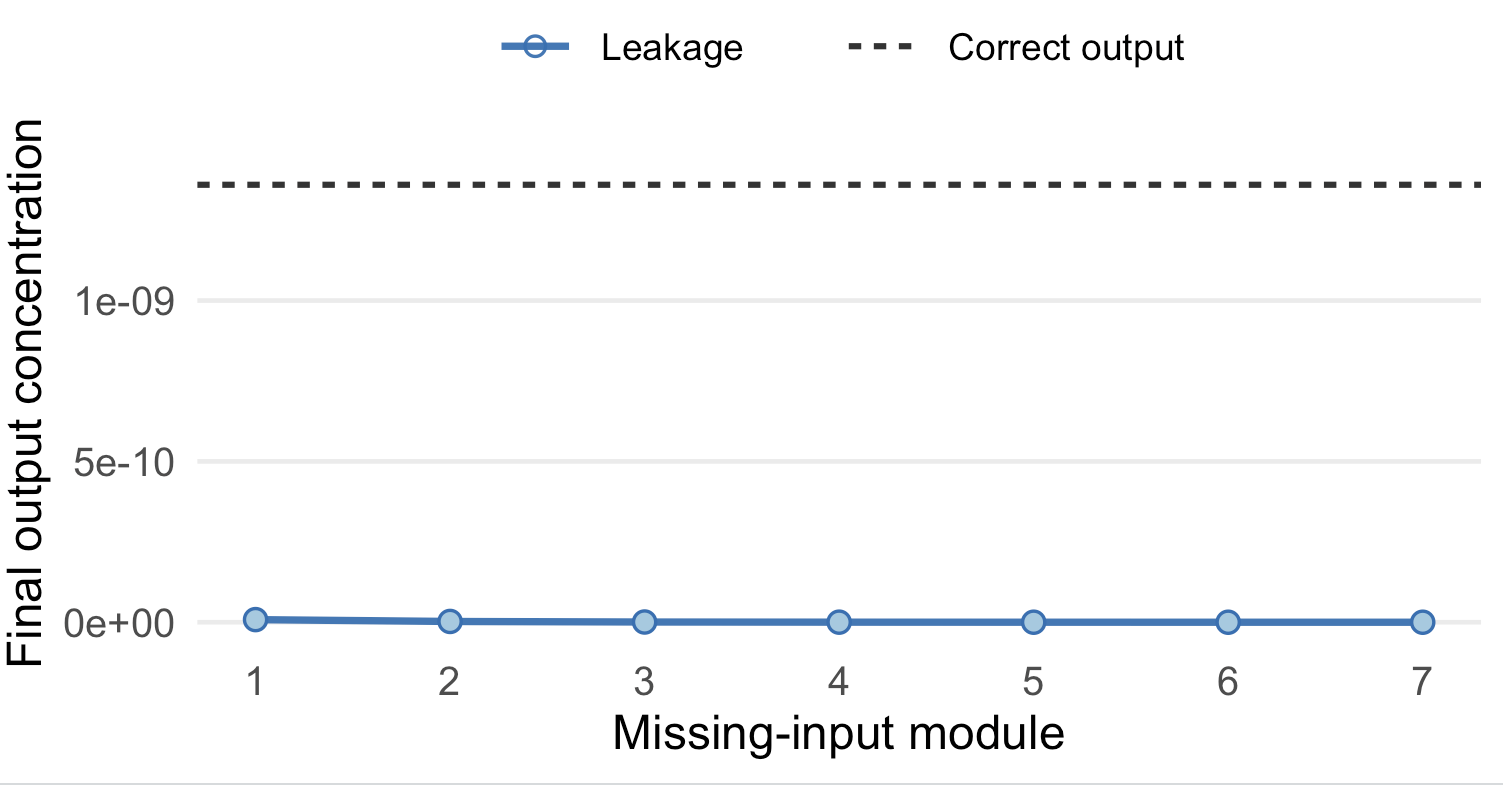}
        \caption{Regime B, $E=-20\,\mathrm{kcal/mol}$}
        \label{fig:leakage-position-regimeB-E20}
    \end{subfigure}
    \hfill
    \begin{subfigure}[t]{0.48\linewidth}
        \centering
        \includegraphics[width=\linewidth]{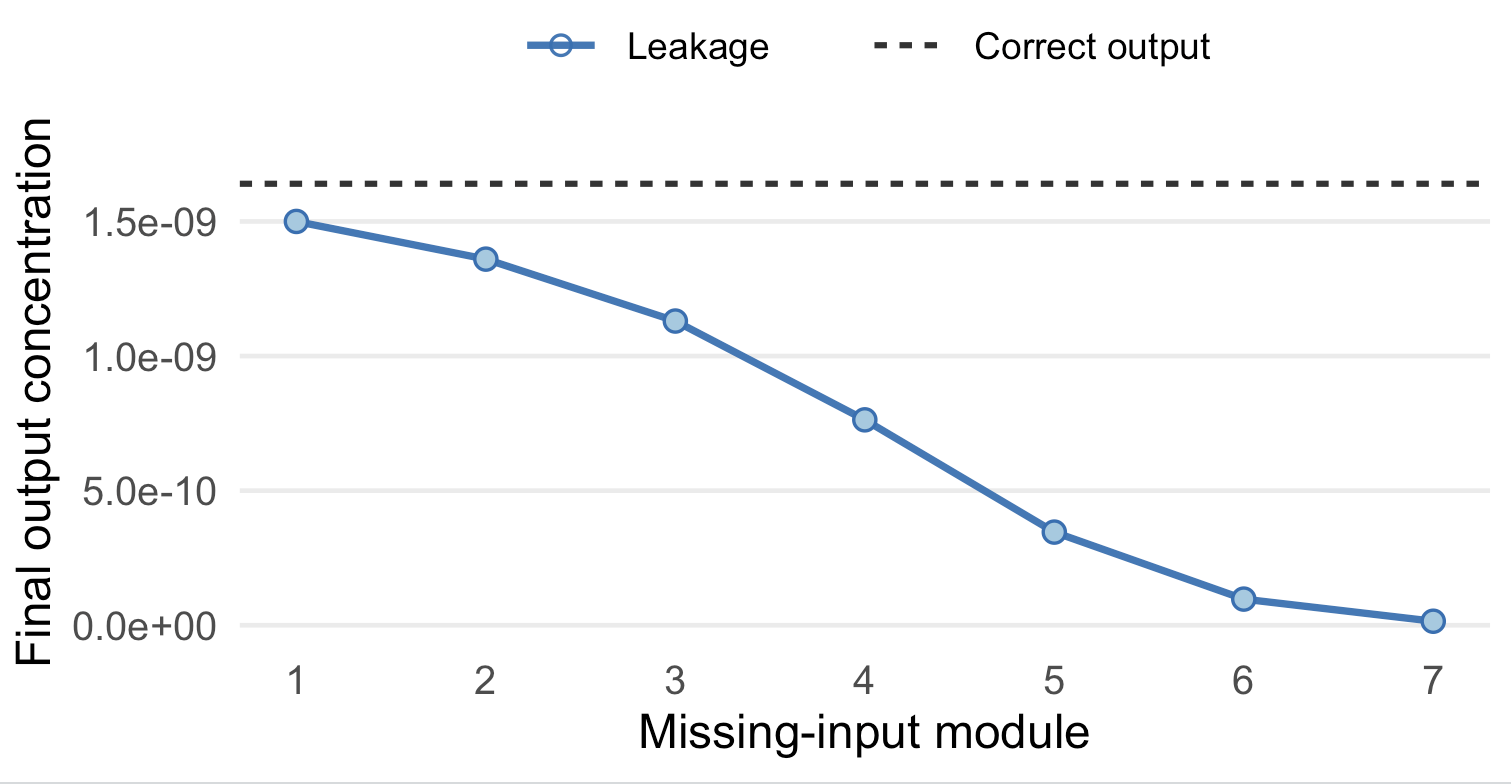}
        \caption{Regime B, $E=-10\,\mathrm{kcal/mol}$}
        \label{fig:leakage-position-regimeB-E10}
    \end{subfigure}

    \caption{
    Final output concentration under single-input deletion in a seven-module linear cascade. 
    The $x$-axis indicates the module whose input is removed. 
    The solid curve shows the leakage concentration, and the dashed horizontal line shows the correct-output concentration when all inputs are present. 
    The four panels compare Regimes A and B under bond energies $E=-20\,,-10\,\mathrm{kcal/mol}$.
    }
    \label{fig:leakage-position}
\end{figure}

As observed in \cref{fig:leakage-position}, removing an input from the first module gives the largest leakage, and the leakage decreases as the removed input moves later in the cascade.
This suggests that leakage generated earlier in the cascade can propagate through subsequent modules, whereas leakage introduced closer to the end has fewer downstream opportunities to amplify.

The second experiment studies how leakage changes as the cascade becomes longer. We remove an input to the first module and vary the number of modules in the linear cascade. For each setting, we compare the leakage output concentration against the correct output concentration obtained when all inputs are present.

\begin{figure}[ht]
    \centering

    \begin{subfigure}[t]{0.48\linewidth}
        \centering
        \includegraphics[width=\linewidth]{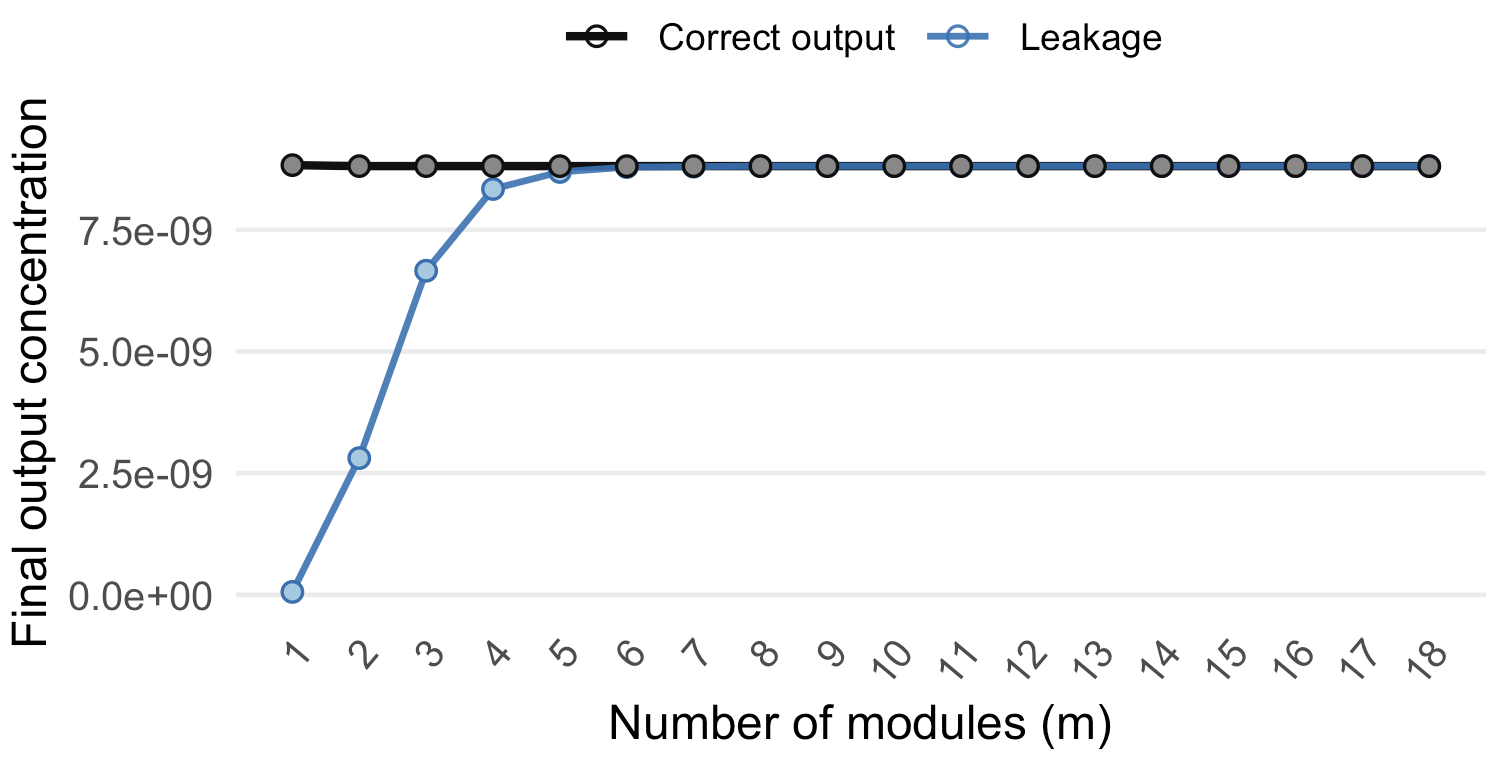}
        \caption{Regime A, $E=-20\,\mathrm{kcal/mol}$}
        \label{fig:leakage-regimeA-E20}
    \end{subfigure}
    \hfill
    \begin{subfigure}[t]{0.48\linewidth}
        \centering
        \includegraphics[width=\linewidth]{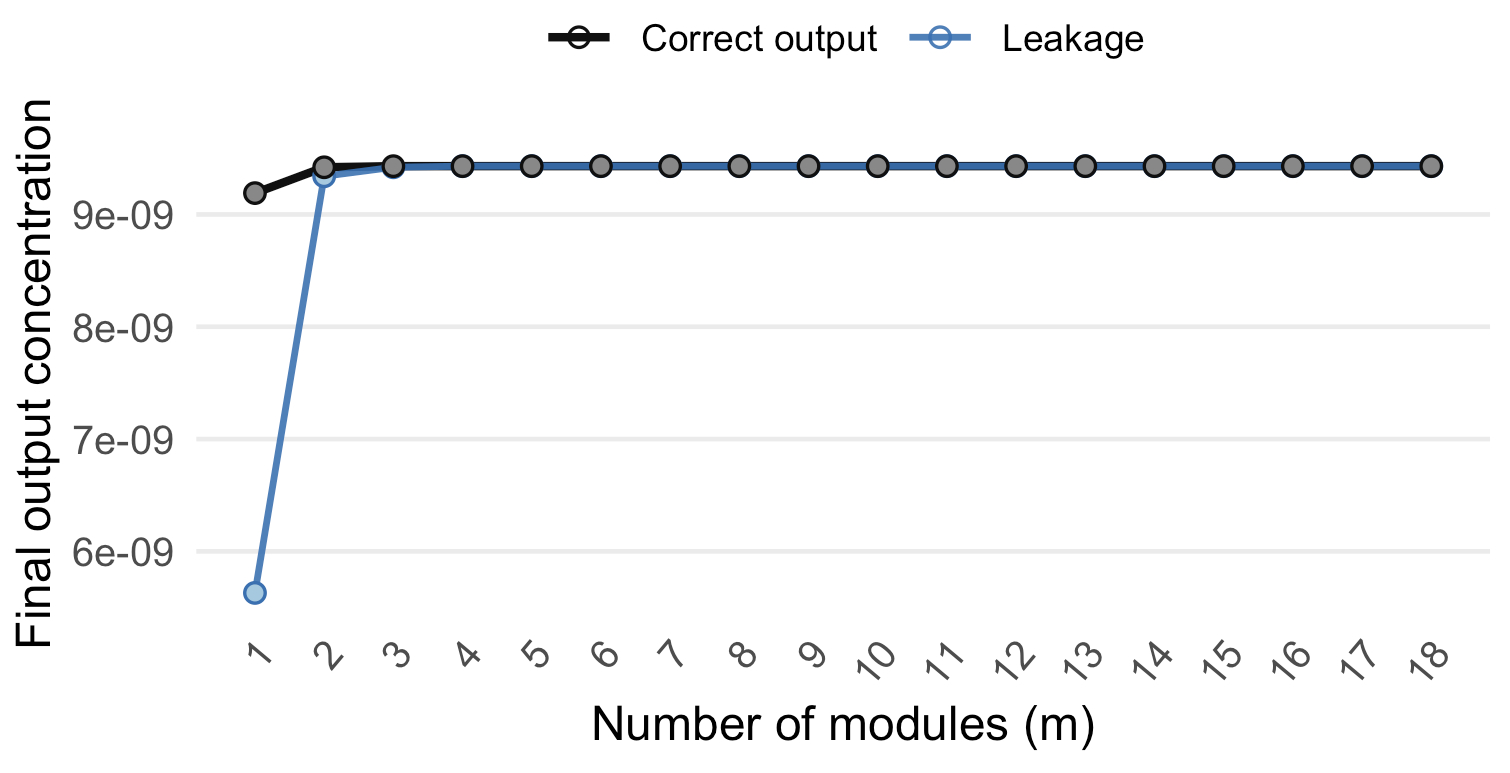}
        \caption{Regime A, $E=-10\,\mathrm{kcal/mol}$}
        \label{fig:leakage-regimeA-E10}
    \end{subfigure}

    \vspace{0.5em}

    \begin{subfigure}[t]{0.48\linewidth}
        \centering
        \includegraphics[width=\linewidth]{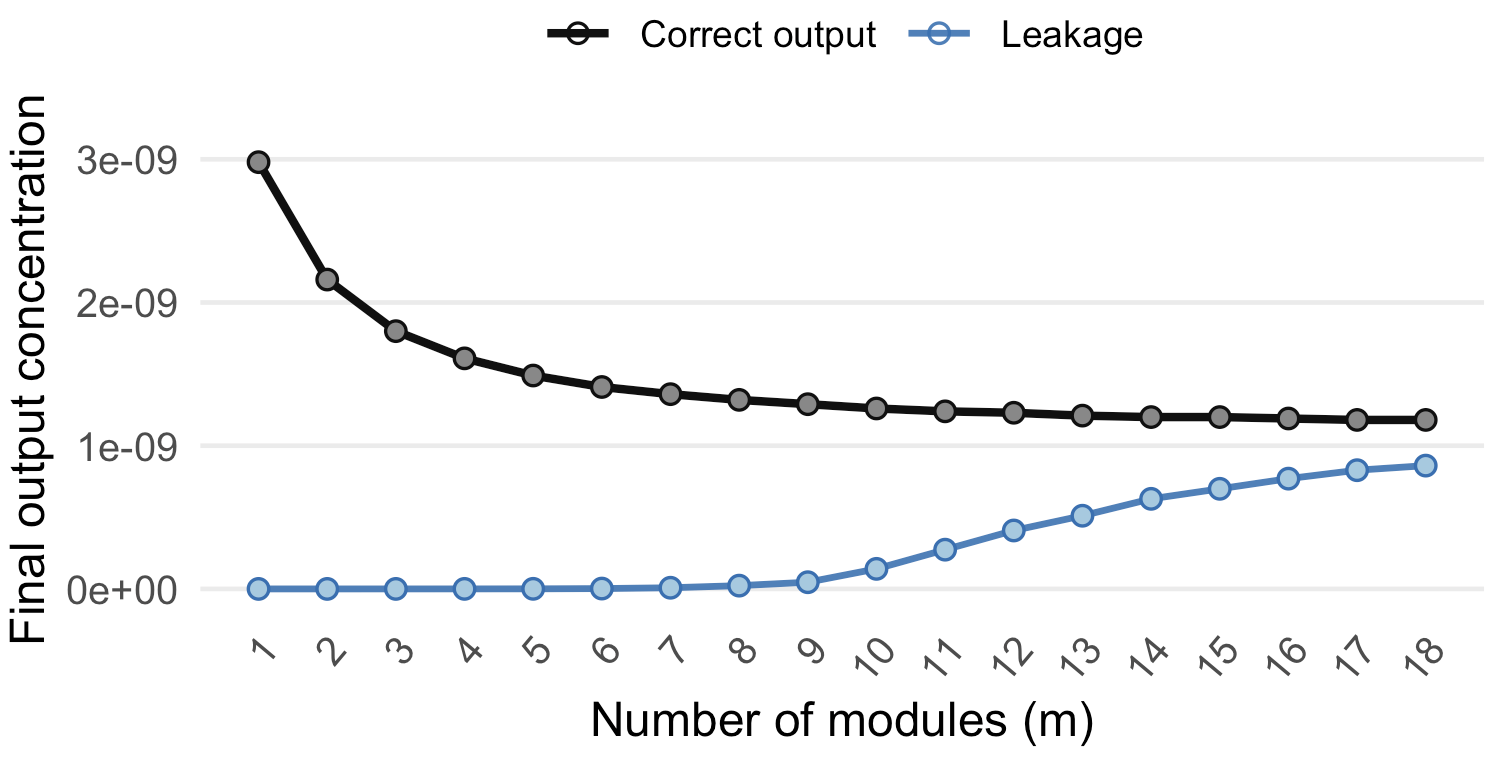}
        \caption{Regime B, $E=-20\,\mathrm{kcal/mol}$}
        \label{fig:leakage-regimeB-E20}
    \end{subfigure}
    \hfill
    \begin{subfigure}[t]{0.48\linewidth}
        \centering
        \includegraphics[width=\linewidth]{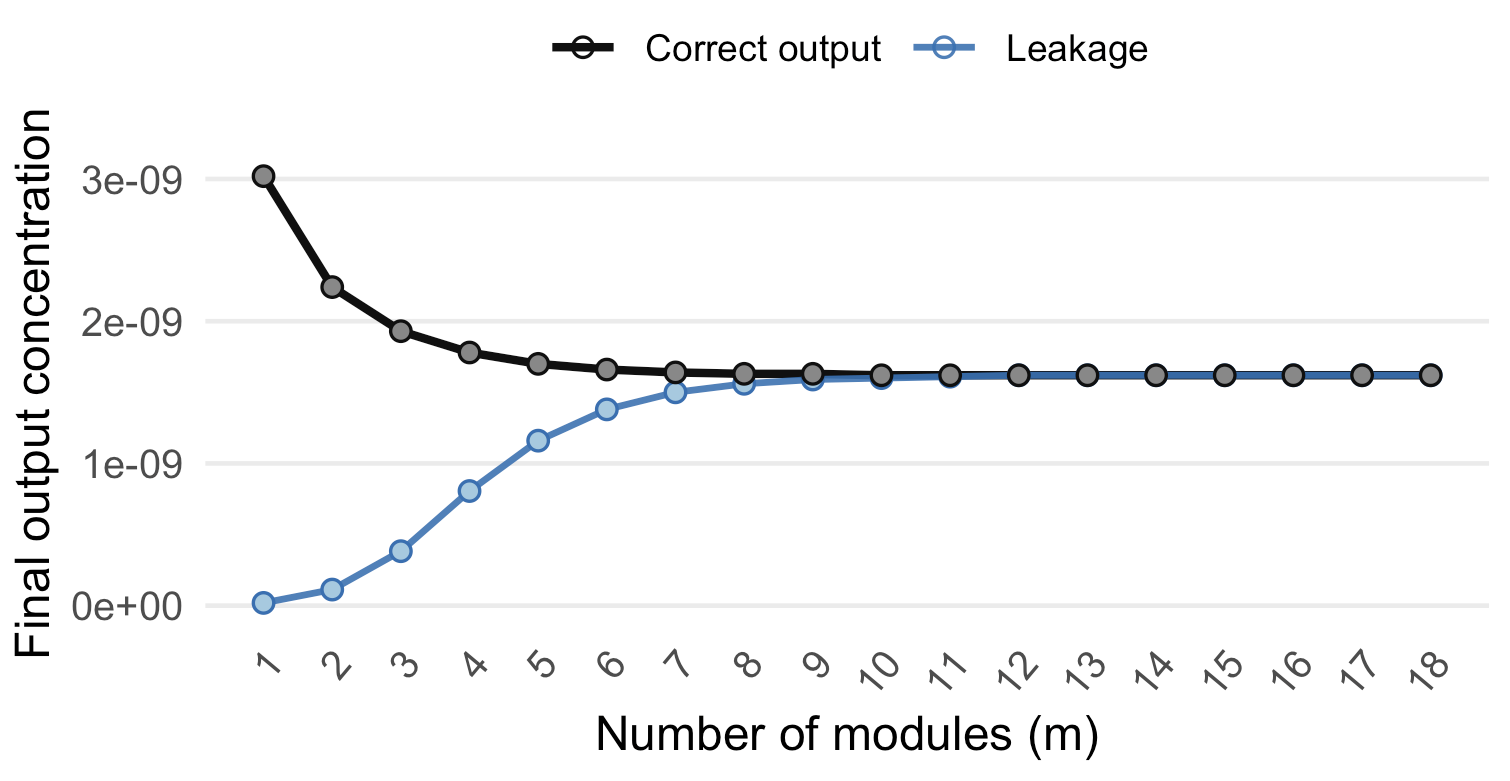}
        \caption{Regime B, $E=-10\,\mathrm{kcal/mol}$}
        \label{fig:leakage-regimeB-E10}
    \end{subfigure}

    \caption{
    Leakage versus correct-output concentration for AND gate linear cascade systems across cascade length. The four panels show the two concentration regimes, A and B, crossed with bond energies $E=-20\,\mathrm{kcal/mol}$ and $E=-10\,\mathrm{kcal/mol}$. For leakage, an input to the first module is removed; for correct output, all inputs are present. All curves are computed with COFFEE using $\hat{P}_{k,t}$, with $t=5$ when feasible and $t=4$ otherwise.
    }
    \label{fig:leakage-analysis}
\end{figure}

We run this experiment under both concentration regimes (A and B) and both bond-energy values (-10 and -20 kcal/mol).
Regime A gives the direct leakage setting, but the final output concentration reaches its plateau quickly.
Regime B suppresses leakage and makes the dependence on cascade length more visible.
Similarly, $E=-20\,\mathrm{kcal/mol}$ mirrors the strong-binding regime used in Wang et al.~\cite{wang2026molecular} (the model used in~\cite{wang2026molecular}, Section S4.2, restricts to saturated polymers only), while $E=-10\,\mathrm{kcal/mol}$ tests a weaker-binding regime where unsaturated polymers play a larger role.
Thus, the four cases in Figure~\ref{fig:leakage-analysis} compare how leakage scales across both concentration and binding regimes.

As shown in Figure~\ref{fig:leakage-analysis}, the final output concentration increases with cascade length.
Regime A reaches a plateau quickly, whereas Regime B has comparatively less leakage.
Weakening the bond energy from $-20\,\mathrm{kcal/mol}$ to $-10\,\mathrm{kcal/mol}$ increases leakage as expected.

Although the runtime still grows roughly exponentially with the system size, the covering-design approach remains practical for systems beyond the range of full Hilbert basis enumeration. When larger systems become intractable, the support parameter $t$ can be decreased to further reduce the enumeration cost. This sacrifices some accuracy in the equilibrium analysis because fewer candidate polymers are included, but the earlier experiments show that small values of $t$ can still recover the dominant equilibrium behavior in the tested systems. The main additional limitation is the size of the covering design: 
the dynamic-programming construction used for large parameters does not guarantee a minimum-size covering design, and may not scale optimally for very large $|M|$. 
Nevertheless, in our experiments, AND gate linear cascade systems with up to eighteen modules were analyzed efficiently enough to support leakage studies that would be infeasible with direct full Hilbert basis computation.

\section{Conclusion}
\label{sec:conclusion}
We introduced a framework for restricting polymer enumeration in geometry-free domain-monomer systems to the set of \emph{Pareto-optimal polymers} for the purpose of equilibrium analysis.
The Hilbert basis analysis provides a principled way to reduce the infinitely many possible polymers to finitely many that dominate the equilibrium. 
Further, because our method permits restricting the number of monomer types in allowed polymers, our algorithm can scale to larger systems for which enumerating all Pareto-optimal polymers would be impractical.
Benchmarking our analysis on AND gate linear cascades, binary trees, and 2-2 linear cascade families from \cite{wang2026molecular,yilmaz2026modularity,sterin2025thermodynamically} shows the efficiency and accuracy of the algorithm for appropriate parameters.

The scalability of our algorithm is based on focusing on polymers with at most $t$ distinct monomer types. This gives users a practical knob for trading completeness against runtime, tied to a design-level quantity rather than to an internal feature of the Hilbert-basis computation. Our experiments show that in many applications, modest values of $t$ suffice to accurately predict the equilibrium behavior.
However, when working with systems in which very large polymers are expected (as with the scaffolded DNA computer~\cite{sterin2025thermodynamically}), this support-bounded analysis is less effective.

Despite the greater efficiency afforded by our algorithm,
our approach inevitably runs into limits for large $t$.
One area for optimization is pre-computation of covering designs for larger system sizes.
Recall that when a precomputed covering design is unavailable~\cite{gordon_2026_19735294}, we use the dynamic-programming construction of Gordon et al.~\cite{gordon1995new}. As described earlier, this construction does not necessarily produce an optimal (smallest possible) covering design.
Nonetheless, even with optimizations, the Hilbert basis algorithm becomes quickly impractical as the system size grows.
The broader open question is whether, beyond the Pareto-optimality criterion introduced here, there are other structural properties---perhaps exploiting additional knowledge of a given system---that could further reduce the candidate set while preserving all equilibrium-relevant polymers.

\bibliography{ref}
\appendix
\section{Deferred Proofs}
\subsection{Proof of \texorpdfstring{\cref{thm:suboptimal-bound}}{Theorem 6}}\label{app:proofbound}
\begin{proof}
Let $\mathbf{p} \in P \setminus P^*$ be a Pareto-suboptimal polymer. By definition, there exists a (nontrivial) split $\mathbf{p} = \mathbf{p}_1 + \mathbf{p}_2$ with $\mathbf{p}_1$ and $\mathbf{p}_2$ not complementary. If either piece is itself suboptimal, split it again, continuing until all remaining polymers are Pareto-optimal. This process terminates because each split strictly decreases the number of monomers in each piece.
Eventually this yields a decomposition of the form:
\[
\mathbf{p} = \boldsymbol{\mu}_1 + \cdots + \boldsymbol{\mu}_k, \qquad k \geq 2, \qquad \boldsymbol{\mu}_i \in P^*,
\]
where every split along the way is non-complementary.
So we have the equilibrium relationship:\footnote{For example, consider the split $\vecp = \vecmu_1+\vecp_2=\vecmu_1+(\vecmu_2+\vecmu_3)$ where $\vecmu_1,\vecmu_2,\vecmu_3$ are Pareto-optimal.
The two intermediate splits $\vecp=\vecmu_1+\vecp_2$ and $\vecp_2=\vecmu_2+\vecmu_3$ are non-complementary, so
we have $[\vecp_2]=[\vecmu_2]\cdot[\vecmu_3]$ and $[\vecp]=[\vecmu_1]\cdot[\vecp_2]$ by \cref{eq:factorization}, which gives $[\vecp]=[\vecmu_1][\vecmu_2][\vecmu_3]$ as desired.}
\begin{equation}\label{eq:product}
[\mathbf{p}] = \prod_{i=1}^{k} [\boldsymbol{\mu}_i].
\end{equation}
Fix one such decomposition for each $\mathbf{p} \in P \setminus P^*$
(although a suboptimal polymer may admit several decompositions).
Note that, after fixing the decompositions, the map
$
\mathbf{p}\mapsto (\boldsymbol{\mu}_1,\dots,\boldsymbol{\mu}_k)
$
is an injective map from the set of Pareto-suboptimal polymers to the set of tuples of Pareto-optimal polymers.
Substituting~\eqref{eq:product} gives
\[
C_{\mathrm{sub}}
= \sum_{\mathbf{p} \in P \setminus P^*} [\mathbf{p}]
= \sum_{\mathbf{p} \in P \setminus P^*} \prod_{i=1}^{k(\mathbf{p})} [\boldsymbol{\mu}_i(\mathbf{p})].
\]
Each suboptimal polymer $\mathbf{p}$ contributes exactly one term to this sum, corresponding to its chosen tuple $(\boldsymbol{\mu}_1, \dots, \boldsymbol{\mu}_k)$ with $k \geq 2$. Since every such tuple is in particular an ordered tuple of elements of $P^*$, and all terms are non-negative, we may bound the sum by ranging over all ordered tuples of $k \geq 2$ elements of $P^*$:\footnote{For example, suppose $P^* = \{\mathbf{a}, \mathbf{b}, \mathbf{c}\}$ and the suboptimal polymers are $\mathbf{a}+\mathbf{b}$, $\mathbf{a}+\mathbf{c}$, and $\mathbf{a}+\mathbf{b}+\mathbf{c}$ for simplicity. Then $C_{\mathrm{sub}} = [\mathbf{a}][\mathbf{b}] + [\mathbf{a}][\mathbf{c}] + [\mathbf{a}][\mathbf{b}][\mathbf{c}]$, while the bound sums $([\mathbf{a}]+[\mathbf{b}]+[\mathbf{c}])^k$ over $k \geq 2$. The $k=2$ term alone already exceeds the first two contributions, and the $k=3$ term covers the third. Each power includes spurious terms (e.g.\ $[\mathbf{a}]^2$ or $[\mathbf{b}]^3$) that do not correspond to any suboptimal polymer, so the bound is loose, but all extra terms are non-negative.}
\[
C_{\mathrm{sub}}
\leq \sum_{k=2}^{\infty}\; \sum_{\boldsymbol{\mu}_1, \dots, \boldsymbol{\mu}_k \in P^*} \prod_{i=1}^{k} [\boldsymbol{\mu}_i].
\]
The inner sum factorizes as a product of independent sums:
\[
\sum_{\boldsymbol{\mu}_1, \dots, \boldsymbol{\mu}_k \in P^*} \prod_{i=1}^{k} [\boldsymbol{\mu}_i]
= \biggl(\sum_{\boldsymbol{\mu} \in P^*} [\boldsymbol{\mu}]\biggr)^{\!k}
= C_{\mathrm{opt}}^k.
\]
Summing the resulting geometric series for $C_{\mathrm{opt}} < 1$ completes the proof:
\[
C_{\mathrm{sub}} \leq \sum_{k=2}^{\infty} C_{\mathrm{opt}}^k = \frac{C_{\mathrm{opt}}^2}{1 - C_{\mathrm{opt}}}.
\qedhere
\]
\end{proof}

\subsection{Proof for Thermodynamic Justification}\label{app:proofTheormo}

\begin{proof}[Proof of \cref{lem:paretosub-equilibrium}]
Let $C$ be a configuration containing a Pareto-suboptimal polymer $\mathbf{x}$. Then there exists a split $\mathbf{x} = \mathbf{x}_1 + \mathbf{x}_2$ with nonzero $\mathbf{x}_1, \mathbf{x}_2 \in \mathbb{N}^M$ such that $\mathbf{x}_1$ and $\mathbf{x}_2$ are not complementary. Let $C'$ be the configuration obtained from $C$ by replacing one copy of $\mathbf{x}$ with one copy each of $\mathbf{x}_1$ and $\mathbf{x}_2$. Since no bond is broken in this split, $H(C') = H(C)$. Since the number of separate polymer units increases by one, $S(C') = S(C) + 1$. Therefore
\[
  G(C') \;=\; a \cdot H(C') - b \cdot S(C') \;=\; G(C) - b \;<\; G(C),
\]
so $C'$ has strictly lower free energy than $C$. Since this holds for any configuration containing $\mathbf{x}$, no such configuration can be a minimum free energy configuration.
\end{proof}

\subsection{Proof of \texorpdfstring{\cref{thm:HilbertCharacterize}}{Theorem 9}}\label{app:proofofHilbertCharacterize}

We observe the following lemma first.
\begin{lemma}\label{lem:lifting_hilbert}
    If $\vecx=\pi(\vech) \in \pi(H)\setminus \{\veczero\}$, then $\vech=\nu(\vecx)$.
\end{lemma}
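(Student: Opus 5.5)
By \cref{lem:unique_lifting}, it suffices to show that $\vech$ is dichotomous. We already have $\vech\in H\subseteq\mathcal C(A')$ and $\pi(\vech)=\vecx$. If $\vech$ is also dichotomous, then uniqueness of the dichotomous neutral lift forces $\vech=\nu(\vecx)$.

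We argue by contradiction. Suppose $\vech$ is not dichotomous, so there is a domain pair $(a,a^*)$ with $\vech_{u_a}\ge 1$ and $\vech_{u_{a^*}}\ge 1$. Let $\vecz\in\N^{M'}$ be the vector with $\vecz_{u_a}=\vecz_{u_{a^*}}=1$ and all other entries zero.
\begin{itemize}
\item $\vecz$ is neutral. By \cref{eqn:A'expanding}, $(A'\vecz)_a=1-1=0$, and every other coordinate is zero. So $\vecz\in\mathcal C(A')\setminus\{\veczero\}$.
\item $\vech-\vecz$ is neutral. It lies in $\N^{M'}$ because $\vech_{u_a},\vech_{u_{a^*}}\ge 1$. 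By linearity, $A'(\vech-\vecz)=\veczero$.
\item $\vech-\vecz$ is nonzero. Its projection satisfies $\pi(\vech-\vecz)=\pi(\vech)=\vecx\neq\veczero$, since $\vecz$ is supported only on unit-monomer coordinates.
\end{itemize}
Hence $\vech=(\vech-\vecz)+\vecz$ writes $\vech$ as a sum of two nonzero vectors of $\mathcal C_{\mathbb R}(A')\cap\Z^{M'}$. This contradicts $\vech\in H$ by the definition of the Hilbert basis. So $\vech$ is dichotomous, and the lemma follows.

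The only delicate point is the third bullet: confirming that $\vech-\vecz\neq\veczero$. This is exactly where the hypothesis $\vecx\neq\veczero$ is used. Without it, $\vech=\vecz$ would be a genuine Hilbert basis element projecting to $\veczero$, which is why $\veczero$ is excluded in \cref{thm:HilbertCharacterize}. Everything else is a direct check using \cref{eqn:A'expanding} and \cref{lem:unique_lifting}.
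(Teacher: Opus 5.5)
Your proof is correct and follows the same route as the paper. Both arguments show that $\vech$ is dichotomous by splitting off $\vece_{u_a}+\vece_{u_{a^*}}$ (your $\vecz$), which contradicts membership in the Hilbert basis, and then invoke \cref{lem:unique_lifting}. Your third bullet, which checks $\vech-\vecz\neq\veczero$ via $\pi(\vech)=\vecx\neq\veczero$, makes explicit a step the paper's proof leaves implicit. It is exactly where the hypothesis $\vecx\neq\veczero$ is needed.
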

\begin{proof}
    Choose $\vech\in H$ such that $\vecx = \pi(\vech)$.
    Observe that $\vece_{u_a}+\vece_{u_{a^*}} \in \mathcal{C}(A')$ for all $a\in D$.
    We will show that $\vech$ is dichotomous, so it is the unique dichotomous neutral lift $\nu(\vecx)$.
    Suppose that $(\vech)_{u_a}\cdot (\vech)_{u_{a^*}}\neq 0$ holds for some $a\in D$.
    Then both of them are positive integers because of $\vech \ge \veczero$, which implies that $\vech -\vece_{u_a}-\vece_{u_{a^*}}\in \mathcal{C}(A')$.
    This means that
    \[\vech = (\vece_{u_a}+\vece_{u_{a^*}}) + (\vech -\vece_{u_a}-\vece_{u_{a^*}})\]
    holds, which contradicts the definition of the Hilbert basis. This shows that $\vech$ must be dichotomous. By the uniqueness in \cref{lem:unique_lifting}, $\vech$ must be $\nu(\vecx)$, so $\nu(\vecx)=\vech \in H.$
\end{proof}

\begin{proof}[Proof of \cref{thm:HilbertCharacterize}]
We prove the two inclusions separately.

\noindent{\bf First inclusion $\pi(H)\setminus\{\veczero\} \subseteq P^*$:}
First, choose $\vecx \in \pi(H)\setminus\{\veczero\}$. Toward contradiction, assume that $\vecx \notin P^*$.
This means that $\vecx$ is Pareto-suboptimal, so that there exists a non-complementary split $\vecx=\vecy+\vecz$ with nonzero $\vecy,\vecz$ such that $(A\vecy)_a\cdot (A\vecz)_a \ge0$ for all $a\in D$. In other words, for each $a$, $(A\vecy)_a,(A\vecz)_a$, as well as $(A\vecx)_a$ are all simultaneously nonnegative (i.e., $\ge0$) or simultaneously nonpositive ($\le 0$).

Because all have the same sign, the neutralization $\nu(\vecx),\nu(\vecy),\nu(\vecz)$ all add only one of $u_a$ or $u_{a^*}$ for each $a$. From this, we have $\nu(\vecx) = \nu(\vecy) +\nu(\vecz)$. However, this is a contradiction because \cref{lem:lifting_hilbert} shows that $\nu(\vecx)\in H$, which must not be written as the sum of two nonzero $\nu(\vecy),\nu(\vecz) \in \mathcal C(A')$. This shows that $\pi(H)\setminus\{\veczero\} \subset P^*$.

\noindent{\bf Second inclusion $P^*\subseteq \pi(H)\setminus\{\veczero\} $:}
Next, choose $\vecx \in P^*$, which is nonzero by definition, and assume that $\vecx \notin \pi(H)\setminus\{\veczero\}$. This shows that $\nu(\vecx) \notin H$, otherwise $\vecx = \pi(\nu(\vecx))\in \pi(H)\setminus \{\veczero\}$.

By the definition of the Hilbert basis, there are two nonzero $\vecv,\vecw \in \mathcal C(A')$ such that $\nu(\vecx)=\vecv+\vecw$, which gives a split $\vecx = \pi(\vecv)+\pi(\vecw)$. We will show that this is a non-complementary split so that $\vecx$ is Pareto-suboptimal.

Given $\vecv,\vecw \ge \veczero$ and the dichotomous property of $\nu(\vecx)$, the split polymers $\vecv,\vecw$ must also be dichotomous.
Furthermore, for each $a\in D$, one of the two unit-monomer coordinates is zero in all three vectors:
\begin{equation}\label{eqn:zero_direct}
(\nu(\vecx))_{u_a}=0 \text{ and } (\vecv)_{u_a}=(\vecw)_{u_a}=0, \text{ or } (\nu(\vecx))_{u_{a^*}}=0 \text{ and } (\vecv)_{u_{a^*}}=(\vecw)_{u_{a^*}}=0.
\end{equation}
Suppose that $\pi(\vecv)=\veczero$ so that $\vecx=\pi(\vecw)$.
The polymer $\vecw$ is a dichotomous neutral element in $\pi^{-1}(\vecx)=\pi^{-1}(\pi(\vecw))$, which must be $\nu(\vecx)$ by \cref{lem:unique_lifting}, leading to a contradiction: $\vecv=\nu(\vecx)-\vecw=\veczero$. Therefore $\pi(\vecv)\neq \veczero$ and similarly $\pi(\vecw)\neq \veczero$.

It remains to show that $\pi(\vecv)$ and $\pi(\vecw)$ are non-complementary. By \cref{eqn:A'expanding},
\[
(A\pi(\vecw))_a + (\vecw)_{u_a}-(\vecw)_{u_{a^*}} = (A'\vecw)_a =0
\]
holds for all $a\in D$, where the last equality is due to $\vecw \in \mathcal C(A').$ By \cref{eqn:zero_direct}, one of the following must be true:
\[
(A\pi(\vecw))_a + (\vecw)_{u_a}=(A\pi(\vecv))_a + (\vecv)_{u_a}=0 \text{ or }
(A\pi(\vecw))_a - (\vecw)_{u_{a^*}}=(A\pi(\vecv))_a - (\vecv)_{u_{a^*}}=0 .
\]
In any case, $(A\pi(\vecw))_a$ and $(A\pi(\vecv))_a$ are both nonnegative or nonpositive because of $\vecv,\vecw\ge \vec0$, so that 
$(A\pi(\vecw))_a\cdot (A\pi(\vecv))_a \ge 0$ for all $a\in D$. This proves that $\pi(\vecv)$ and $\pi(\vecw)$ are non-complementary. Since $\pi(\vecv),\pi(\vecw)\neq \veczero$ and $\vecx=\pi(\vecv)+\pi(\vecw)$, this is a non-complementary split of $\vecx$, contradicting $\vecx\in P^*$. Therefore $\vecx\in \pi(H)\setminus\{\veczero\}$.

Combining the two inclusions, we conclude the proof.
\end{proof}

\subsection{Proof for Covering Design Strategy}\label{app:proofcovring}
\begin{proof}[Proof of \cref{lem:Paretocharacter_embedding}]
    First, if $\vecp \in P^*_S \subset P^*$, then any split $\vecp=\vecp_1+\vecp_2$ for nonzero $\vecp_1,\vecp_2 \in \Supp(S)\subset \N^M$ must be complementary. Therefore, $\vecp$ is a Pareto-optimal polymer in $(\Sigma,S)$ after embedding.

    On the other hand, if $\vecp$ is a Pareto-optimal polymer in $(\Sigma,S)$ after embedding, it is clear that $\vecp \in \Supp(S).$
    Also, any split of it $\vecp=\vecp_1+\vecp_2$ for nonzero $\vecp_1,\vecp_2 \in \Supp(S)$ must be complementary. We showed that all splits must be in this form, thus $\vecp$ is in $P^*$. This proves the equivalence.
\end{proof}

\begin{proof}[Proof of \cref{lem:union_naive}]
    For the first statement, observe that the entries in $M\setminus S$ are always 0 and can be removed. After removal, this is exactly \cref{thm:HilbertCharacterize} on the system $(\Sigma,S)$.

    For the second statement, any $\vecp \in P_t^*$ has $|\supp(\vecp)|\le t$, so $\supp(\vecp)$ is contained in some subset $S\subseteq M$ with $|S|=t$. Hence $\vecp\in P_S^*=\pi(H_S)\setminus\{\veczero\}$. Conversely, if $\vecp\in \pi(H_S)\setminus\{\veczero\}$ for some $S\subseteq M$ with $|S|=t$, then $\vecp\in P_S^*\subseteq P^*$ and $|\supp(\vecp)|\le |S|=t$, so $\vecp\in P_t^*$.
\end{proof}

\begin{proof}[Proof of \cref{thm:covering}]
    The inclusion $\subseteq P^*$ holds because $\pi(H_B)\setminus \{\veczero\}=P_B^* =P^* \cap \Supp(B) \subset P^*.$

    For the first inclusion, it suffices to show that $P_S^* \subseteq  \hat{P}_{k,t} $ for all $|S|=t$ because of \cref{lem:union_naive}.
    By the definition of the covering design and \cref{eqn:inclusion}, there exists $B\in \mathcal D$ such that $S\subseteq B$ so that
    $P_S^* \subseteq P_B^*$. This concludes the proof.
\end{proof}

\section{Scalability Limitations}\label{app:failed}
\cref{thm:HilbertCharacterize} shows that the set of Pareto-optimal polymers $P^*$ is finite and computable.
We can employ a modern implementation, such as \Normaliz, to compute the Hilbert basis $H$, which gives the set of Pareto-optimal polymers for a relatively small domain-monomer system.
However, this approach quickly becomes impractical as the system grows.
We give some explanation of this phenomenon in the next subsection.

A natural way to remedy this inefficiency is to terminate the algorithm early, expecting to obtain a part of the full Hilbert basis. Unfortunately, the state-of-the-art implementations of Hilbert basis algorithms usually enumerate a superset of the Hilbert basis first and then iteratively reduce it to the Hilbert basis.
This algorithmic aspect makes it difficult to introduce a meaningful stopping criterion, as we are not aware of any other way to understand the intermediate superset except for the reduction to the Hilbert basis.

We have made several attempts to employ scalable algorithms, which turned out to be inefficient.
We explain two notable approaches below.

\begin{remark}[Alternative representation]
We can use the cone $\{\mathbf{x} \in \mathbb{N}^{|M|+|D|} \mid A'\mathbf{x} \geq \mathbf{0}\}$ and its Hilbert basis by only adding the complementary unit monomers $\{u_{a^*}\}_{a \in D}$.
In this setting, the net domain vector has nonnegative entries instead of being a zero vector. This formulation and computation ultimately gives the same result as in this paper, but modern implementations such as \Normaliz are more native to the cone $A'\mathbf{x} = \mathbf{0}$, as we present in the main body.
\end{remark}

\subsection{Hilbert Basis Hardness}
The size of the Hilbert basis rapidly increases along several parameters, and counting the number of Hilbert basis elements is known to be NP-hard (strictly speaking, \#P-hard \cite{hermann1999complexity}). In fact, the Hilbert basis of a pointed cone defined by a tiny matrix can be arbitrarily large:
the Hilbert basis of the pointed cone $\mathcal C_{\mathbb R}(A_k)=\{\vecx \in \mathbb R^3 | A_k \vecx =0 , \vecx \ge 0\}$
for $A_k = \begin{bmatrix}
    1&1&-k
\end{bmatrix}$ is
\[
H_k = \{(i,k-i,1)\mid i=0,\ldots,k\}.
\]
In this example, the size of $H_k$ increases exponentially for the bit-length of describing $A_k$, i.e., the input size.

\subsection{Failed Approaches}
This subsection introduces two slow approaches, highlighting that it is not clear how to obtain a practical, scalable algorithm by either (i) employing an older scalable Hilbert basis algorithm or (ii) directly exploiting properties of the Pareto-optimal polymers.
This motivates the main approach in \cref{sec:covering}, employing both modern Hilbert basis algorithms and properties of the Pareto-optimal polymers.

The first approach is based on an old Hilbert basis algorithm~\cite{contejean1994efficient}. This algorithm inspects candidate basis elements by increasing the target $1$-norm (i.e., the sum of absolute values of entries). This is exceedingly slow, especially because of the enormous number of small-norm polymers.
The algorithm reduces the search size using some geometric ideas, but our implementation failed to terminate even for small examples.

Another approach is to use the following nontrivial fact for Pareto-optimal polymers.

\begin{lemma}
For every Pareto-optimal polymer $\vecp \in P^*$ that is not a monomer, there exist nonzero $\vecp_1, \vecp_2 \in P^*$ such that $\vecp = \vecp_1 + \vecp_2$.
\end{lemma}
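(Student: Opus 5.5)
The plan is to argue by a maximality (extremal) choice, using the neutralization calculus of \cref{sec:hilbert} to turn sign-compatibility into additivity of $\nu$.

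Fix $\vecp\in P^*$ that is not a single monomer. Every monomer $\vece_m$ is trivially Pareto-optimal, since it admits no split into two nonzero polymers. Hence the family
\[
\mathcal R=\{\vecr\in P^*:\ \vecr\le \vecp,\ \vecr\ne \vecp\}
\]
is nonempty. Choose $\vecr\in\mathcal R$ with maximal total size $\mathbf 1\cdot\vecr$, and set $\vecs=\vecp-\vecr\neq\veczero$. If $\vecs\in P^*$ we are done. Otherwise, decompose $\vecs$ by repeated non-complementary splits, exactly as in the proof of \cref{thm:suboptimal-bound}, to get
\[
\vecs=\boldsymbol\sigma_1+\cdots+\boldsymbol\sigma_j,\qquad j\ge 2,\quad \boldsymbol\sigma_i\in P^*.
\]

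The first key step is to record that such a decomposition is \emph{conformal}. For each domain $a$, all $(A\boldsymbol\sigma_i)_a$ have the same sign as $(A\vecs)_a$, since each non-complementary split preserves weak sign agreement. As in the first inclusion of the proof of \cref{thm:HilbertCharacterize}, this gives
\[
\nu(\vecs)=\sum_i\nu(\boldsymbol\sigma_i),
\]
and the same additivity holds for any partial sums of the $\boldsymbol\sigma_i$. I would then aim to show that $\vecr+\boldsymbol\sigma_i$ is Pareto-optimal for some $i$. Since $j\ge 2$, this polymer is a proper subpolymer of $\vecp$ strictly larger than $\vecr$, contradicting the maximality of $\vecr$. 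Hence $j=1$, i.e.\ $\vecs\in P^*$.

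The main obstacle is the claim that some $\vecr+\boldsymbol\sigma_i$ is Pareto-optimal. Suppose instead that for every $i$ there is a non-complementary split $\vecr+\boldsymbol\sigma_i=\vecy_i+\vecz_i$. The goal is to assemble these into a non-complementary split of $\vecp$, contradicting $\vecp\in P^*$. My first attempt would work in the augmented system. By \cref{lem:lifting_hilbert}, $\nu(\vecp)\in H$. I would combine $\nu(\vecy_i)+\nu(\vecz_i)=\nu(\vecr+\boldsymbol\sigma_i)$ with the conformal sum for the remaining $\boldsymbol\sigma_{i'}$, $i'\neq i$, to write $\nu(\vecp)$, possibly after cancelling pairs $\vece_{u_a}+\vece_{u_{a^*}}$, as a sum of two nonzero neutral vectors. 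That would contradict the irreducibility of $\nu(\vecp)$.

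The delicate point is sign interaction on domains where $A\vecr$ and $A\vecs$ have opposite signs; these exist because $\vecr,\vecs$ must be complementary. On such domains $\nu$ is not additive. I would handle them by choosing $i$ to be an index whose $\boldsymbol\sigma_i$ is complementary to $\vecr$; such an index must exist, since otherwise $\vecr$ and $\vecs$ would be non-complementary. I would then check coordinate by coordinate, mirroring \cref{eqn:zero_direct}, that the resulting pieces remain dichotomous. If this direct route fails, the fallback is to strengthen the extremal choice. I would pick the pair $(\vecr,\vecs)$ minimizing the number of pieces $j$ over all maximal $\vecr$, and try to show that any non-complementary split of $\vecr+\boldsymbol\sigma_i$ yields a proper Pareto-optimal subpolymer larger than $\vecr$.
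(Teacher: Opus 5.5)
The steps before your main obstacle are fine: the existence of $\mathbf r$, the conformal decomposition of $\mathbf s$, and additivity of $\nu$ over that decomposition. But the whole content of the lemma is the claim that some $\mathbf r+\boldsymbol\sigma_i$ is Pareto-optimal, and you do not prove it.

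\textbf{Your direct route cannot work.} It tries to assemble non-complementary splits of all the $\mathbf r+\boldsymbol\sigma_i$ into one of $\vecp$, and it never uses maximality of $\mathbf r$. Without maximality the implication is false. Take $m_1=\{a,b,c^*,c^*\}$, $m_2=\{a,b^*\}$, $m_3=\{a^*,b^*,c\}$, with net vectors $(1,1,-2)$, $(1,-1,0)$, $(-1,-1,1)$, and set $\vecp=m_1+m_2+2m_3$.
\begin{itemize}
\item The five splits of $\vecp$ (up to swapping sides) are all complementary, so $\vecp\in P^*$.
\item $\mathbf r=m_1+m_2\in P^*$, since $m_1,m_2$ are complementary on $b$.
\item $\mathbf s=2m_3=\boldsymbol\sigma_1+\boldsymbol\sigma_2$ with $\boldsymbol\sigma_i=m_3$, each complementary to $\mathbf r$.
\item Yet $\mathbf r+m_3=m_2+(m_1+m_3)$ is a non-complementary split, because $A(m_1+m_3)=(0,0,-1)$.
\end{itemize}

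\textbf{Size-maximality does not rescue your conclusion either.} Replace $m_2$ by the block $\mathbf g=G_1+G_1'+G_2+2G_2'$, where $G_1=\{d,d,d,a\}$, $G_1'=\{d,d,d\}$, $G_2=\{d^*,d^*,b^*\}$, $G_2'=\{d^*,d^*\}$. The block has the same net as $m_2$ on $a,b,c$. Its $d$-counts $3,3,-2,-2,-2$ have no proper zero-sum subset, so every split that cuts $\mathbf g$ is complementary on $d$. Hence $\vecp'=m_1+\mathbf g+2m_3\in P^*$.

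Every $7$-monomer subpolymer of $\vecp'$ is suboptimal. Removing $m_1$, $m_3$, $G_1$ or $G_1'$, or $G_2$ or a copy of $G_2'$ leaves, respectively, the non-complementary splits $m_3\mid(\mathbf g+m_3)$, $\mathbf g\mid(m_1+m_3)$, $G_2'\mid\text{rest}$, and $(G_1'+G_2')\mid\text{rest}$.

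So $\mathbf r'=m_1+\mathbf g$ is a Pareto-optimal proper subpolymer of maximum size $6$, yet $\vecp'-\mathbf r'=2m_3\notin P^*$. Your ``hence $j=1$'' therefore fails for this maximum-size choice. Your fallback of re-choosing among maximal $\mathbf r$ names no mechanism, so the gap remains.

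\textbf{A different extremal choice works and gives a short proof.} Among all splits $\vecp=\vecq+\vecq'$ into nonzero parts, take one minimizing $\|A\vecq\|_1+\|A\vecq'\|_1$, the number of unbound domains. Suppose $\vecq=\vecq_1+\vecq_2$ were a non-complementary split.
\begin{itemize}
\item Then $\|A\vecq\|_1=\|A\vecq_1\|_1+\|A\vecq_2\|_1$.
\item The admissible split $\vecp=\vecq_1+(\vecq_2+\vecq')$ therefore costs at most the minimum.
\item This forces equality in $\|A(\vecq_2+\vecq')\|_1\le\|A\vecq_2\|_1+\|A\vecq'\|_1$, i.e.\ $\vecq_2$ and $\vecq'$ are non-complementary.
\item Symmetrically, $\vecq_1$ and $\vecq'$ are non-complementary.
\item Adding gives $(A\vecq)_a(A\vecq')_a\ge 0$ for every $a$. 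That is a non-complementary split of $\vecp$, a contradiction.
\end{itemize}
The same argument applies to $\vecq'$, so both parts lie in $P^*$. The augmented system is not needed.
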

We will not prove this lemma here because it is not relevant to the main body of this paper.
Given this lemma, we can recursively enumerate $P^*$ by repeatedly merging pairs $\vecp, \vecq$ of previously enumerated Pareto-optimal polymers and checking whether the resulting $\vecp+\vecq$ is Pareto-optimal.
Our implementation (with some optimizations) shows that this approach is also much slower than the simple Hilbert basis algorithm using \cref{thm:HilbertCharacterize}.

\section{Domain Support Mode}
\label{sec:domainmode}

This section introduces an alternative way to make the computation of the Pareto-optimal polymers scalable, based on the domain supports instead of the monomer supports. In the \emph{domain mode}, subset enumeration is applied over domain pairs rather than monomer types.

For a monomer $m$, we define the domain-support of $m$, denoted by $\supp_\Sigma(m)$, as the set of domain-pair coordinates $d\in D$ such that the encoded vector $\veca_m \in \Z^D$ has a nonzero entry at the $d$-th position, i.e., $\veca_m(d)\neq 0$. 

In the \emph{domain mode}, we compute the Pareto-optimal polymers whose monomers are, for some $t$-element subset $T\subseteq D$, all in a set
\[
M_T=\{m\in M\mid \supp_\Sigma(m) \subseteq T\},
\]
i.e., those monomers whose domains all lie within $T$.
Similar to the monomer support case, we compute the set
\[
\hat{P}^{\mathrm{dom}}_{k,t}
=
\bigcup_{B \in \mathcal D}
\left(\pi(H^{\mathrm{dom}}_B) \setminus \{\veczero\}\right),
\]
where $\mathcal D$ is the $(|D|,k,t)$-covering design and $H^{\mathrm{dom}}_B$ is the Hilbert basis for the augmented domain-monomer matrix $A_B' \in \Z^{|B|\times (|M_B|+2|B|)}$ obtained by removing the irrelevant coordinates.

We omit the detailed analysis as it is analogous to the monomer mode.

Domain mode may be advantageous when the number of domains is relatively smaller than the number of monomers, or when many monomers share the same domain pairs, as in cascade and binary-tree TBNs, in which case the relevant covering design size $c(|D|, k, t)$ may be significantly smaller than $c(|M|, k, t)$.

\section{TBN System Examples}

\begingroup
\setlength{\columnsep}{1.2em}
\begin{multicols}{2}

\subsection{Linear Cascade 4 Module AND Gate System}

\begin{center}
  \includegraphics[width=\linewidth]{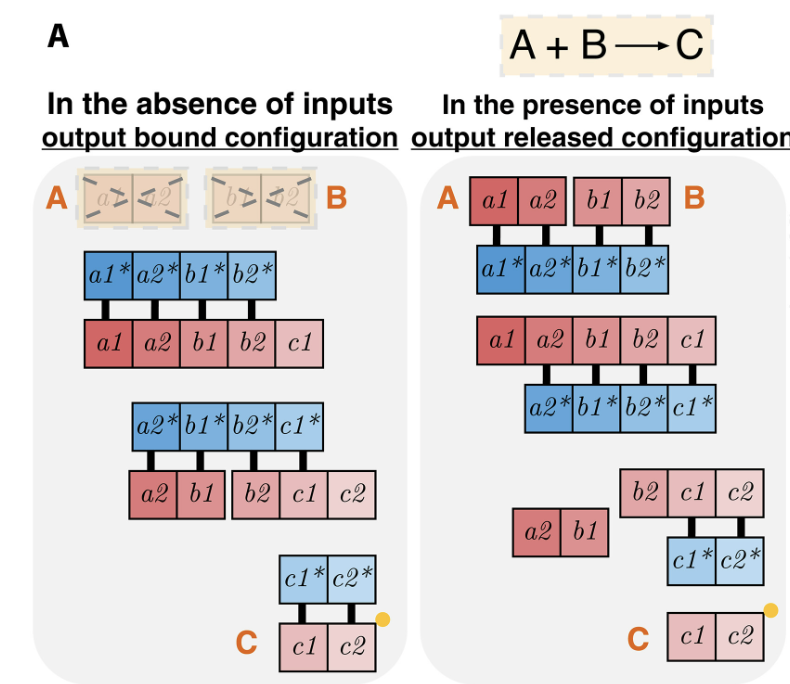}
  \captionsetup{hypcap=false}
  \captionof{figure}{Mechanism of a single AND-gate module in the linear cascade system. In the presence of both input polymers, the module can rearrange to release the output polymer, which then serves as an input to the next module in the cascade. Figure adapted from Fig.~3 of \cite{wang2026molecular}.}
  \label{fig:linear-cascade-module}
\end{center}

\begin{Verbatim}[fontsize=\scriptsize,breaklines=true,breakanywhere=true]
inputs: x1_1 x1_2
        x2_1 x2_2
        x3_1 x3_2
        x4_1 x4_2
        x5_1 x5_2
        
# Module 1: x1 + x2 -> y1
x1_1 x1_2 x2_1 x2_2 y1_1
x1_1* x1_2* x2_1* x2_2*
x1_2 x2_1
x2_2 y1_1 y1_2
x1_2* x2_1* x2_2* y1_1*
y1_1* y1_2*
outputy1: y1_1 y1_2

# Module 2: y1 + x3 -> y2
y1_1 y1_2 x3_1 x3_2 y2_1
y1_1* y1_2* x3_1* x3_2*
y1_2 x3_1
x3_2 y2_1 y2_2
y1_2* x3_1* x3_2* y2_1*
y2_1* y2_2*
outputy2: y2_1 y2_2

# Module 3: y2 + x4 -> y3
y2_1 y2_2 x4_1 x4_2 y3_1
y2_1* y2_2* x4_1* x4_2*
y2_2 x4_1
x4_2 y3_1 y3_2
y2_2* x4_1* x4_2* y3_1*
y3_1* y3_2*
outputy3: y3_1 y3_2

# Module 4: y3 + x5 -> y4
y3_1 y3_2 x5_1 x5_2 y4_1
y3_1* y3_2* x5_1* x5_2*
y3_2 x5_1
x5_2 y4_1 y4_2
y3_2* x5_1* x5_2* y4_1*
y4_1* y4_2*
outputy4: y4_1 y4_2
\end{Verbatim}

\subsection{2-2 Linear Cascade System}
\begin{Verbatim}[fontsize=\scriptsize,breaklines=true,breakanywhere=true]
x11 x12 x13
x21 x22 x23
x11 x12 x13 x21 x22 x23 y11 z11
x11* x12* x13* x21* x22* x23*
x12 x13 x22 x23 y11 y12 z11 z12
x12* x13* x22* x23* y11* z11*
x13 x23 y11 y12 y13 z11 z12 z13
x13* x23* y11* y12* z11* z12*
y11* y12* y13* z11* z12* z13*
y11 y12 y13
z11 z12 z13
y11 y12 y13 z11 z12 z13 y21 z21
y11* y12* y13* z11* z12* z13*
y12 y13 z12 z13 y21 y22 z21 z22
y12* z13* z12* z13* y21* z21*
y13 z13 y21 y22 y23 z21 z22 z23
y13* z13* y21* y22* z21* z22*
y21* y22* y23* z21* z22* z23*
y21 y22 y23
z21 z22 z23
\end{Verbatim}

\subsection{Scaffolded DNA System}
\begin{Verbatim}[fontsize=\scriptsize,breakanywhere=true]
s[1]* s[2]* s[3]* s[4]* s[5]* s[6]* s[7]* s[8]* s[9]* s[10]*
t0[1] s[1] t0[2]*
t1[1] s[1] t1[2]*
t0[2] s[2] t0[3]*
t1[2] s[2] t1[3]*
t0[3] s[3] t0[4]*
t1[3] s[3] t1[4]*
t0[4] s[4] t0[5]*
t1[4] s[4] t1[5]*
t0[5] s[5] t0[6]*
t1[5] s[5] t1[6]*
t0[6] s[6] t0[7]*
t1[6] s[6] t1[7]*
t0[7] s[7] t0[8]*
t1[7] s[7] t1[8]*
t0[8] s[8] t0[9]*
t1[8] s[8] t1[9]*
t0[9] s[9] t0[10]*
t1[9] s[9] t1[10]*
t0[10] s[10]
t1[10] s[10]
\end{Verbatim}

\subsection{AND Gate Binary Tree System}
\begin{Verbatim}[fontsize=\scriptsize,breaklines=true,breakanywhere=true]
input: x1_1 x1_2, x2_1 x2_2, x3_1 x3_2, x4_1 x4_2
# Module 1: x1 + x2 -> y1
x1_1 x1_2 x2_1 x2_2 y1_1
x1_1* x1_2* x2_1* x2_2*
x1_2 x2_1
x2_2 y1_1 y1_2
x1_2* x2_1* x2_2* y1_1*
y1_1* y1_2*
outputy1: y1_1 y1_2
# Module 2: x3 + x4 -> y2
x3_1 x3_2 x4_1 x4_2 y2_1
x3_1* x3_2* x4_1* x4_2*
x3_2 x4_1
x4_2 y2_1 y2_2
x3_2* x4_1* x4_2* y2_1*
y2_1* y2_2*
outputy2: y2_1 y2_2
# Module 3: y1 + y2 -> y3
y1_1 y1_2 y2_1 y2_2 y3_1
y1_1* y1_2* y2_1* y2_2*
y1_2 y2_1
y2_2 y3_1 y3_2
y1_2* y2_1* y2_2* y3_1*
y3_1* y3_2*
outputy3: y3_1 y3_2
\end{Verbatim}

\end{multicols}
\endgroup
\end{document}